\pgfplotsset{compat=newest}
\let\oldbibliography\thebibliography
\renewcommand{\thebibliography}[1]{\oldbibliography{#1}
\setlength{\itemsep}{1pt}}
\newcommand{\la}{\leftarrow}
\newcommand{\hide}[1]{}
\DeclareMathOperator*{\argmin}{arg\,min}
\DeclareMathOperator*{\argmax}{arg\,max}
\newcommand{\bbb}{\mathbf{b}}
\newcommand{\bbp}{\mathbf{p}}
\newcommand{\bbw}{\mathbf{w}}
\newcommand{\bbx}{\mathbf{x}}
\newcommand{\bbz}{\mathbf{z}}
\newcommand{\inner}[2]{\left\langle #1,#2\right\rangle}
\newcommand{\rint}{\mathsf{rint}}
\newcommand{\KL}{\mathrm{KL}}
\newtheorem{example}{Example}
\newtheorem{theorem}{Theorem}
\title{Learning in Markets: Greed Leads to Chaos but \\Following the Price is Right}
\author[1]{Yun Kuen Cheung}
\author[2]{Stefanos Leonardos}
\author[2]{Georgios Piliouras}
\affil[1]{Royal Holloway University of London, \emph{yunkuen.cheung@rhul.ac.uk}}
\affil[2]{Singapore University of Technology and Design, \emph{\{stefanos\_leonardos, georgios\}@sutd.edu.sg}}
\date{}
\def\BState{\State\hskip-\ALG@thistlm}
\let\oldReturn\Return
\renewcommand{\Return}{\State\oldReturn}
\algnewcommand{\Initialize}[1]{%
  \State \textbf{initialize:} \hspace*{0.2em}\parbox[t]{.4\linewidth}{\raggedright #1}
}
\renewcommand{\(}{\left(}
\renewcommand{\)}{\right)}
\newcommand{\lt}{\left[}
\newcommand{\rt}{\right]}
\newcommand{\qt}{\enquote}
\newcommand{\x}{x_{ij}}
\renewcommand{\b}{b_{ij}}
\renewcommand{\v}{v_{ij}}
\renewcommand{\comment}[1]{\ignorespaces}
\renewcommand{\L}{\mathcal L}
\newcommand{\tb}{\tilde{b}}
\newcommand{\fof}{f^{\(3\)}\(x\)}
\newcolumntype{L}{l>{\hspace*{-0.6\tabcolsep}}}
\newcolumntype{R}{r<{\hspace*{-0.6\tabcolsep}}}
\newcolumntype{C}{c<{\hspace*{-0.6\tabcolsep}}}
\definecolor{scolor}{rgb}{0.733,0.843,0.890}
\definecolor{ocolor}{rgb}{0.783,0.893,0.940}
\definecolor{oocolor}{rgb}{0.783,0.893,0.940}
\definecolor{ppcolor}{rgb}{0.212,0.506,0.584}
\definecolor{pcolor}{rgb}{0.255,0.596,0.686}
\definecolor{rcolor}{rgb}{0.294,0.675,0.776}
\definecolor{ecolor}{rgb}{0.569,0.765,0.835}
\definecolor{scolor}{rgb}{0.733,0.843,0.890}
\definecolor{ocolor}{rgb}{0.783,0.893,0.940}
\definecolor{oocolor}{rgb}{0.783,0.893,0.940}
\newtheorem{lemma}[theorem]{Lemma}
\newtheorem{proposition}[theorem]{Proposition}
\theoremstyle{definition}
\newtheorem{definition}[theorem]{Definition}
\theoremstyle{remark}
\newtheorem*{remark}{Remark}
\begin{document}
\maketitle

\begin{abstract}

We study learning dynamics in distributed production economies such as blockchain mining, peer-to-peer file sharing and crowdsourcing.
These economies can be modelled as multi-product Cour\-not competitions or all-pay auctions (Tullock contests) when individual firms have market power,
or as Fisher markets with quasi-linear utilities when every firm has negligible influence on market outcomes.
In the former case, we provide a formal proof that Gradient Ascent (GA) can be Li-Yorke chaotic
for a step size as small as $\Theta(1/n)$, where $n$ is the number of firms. 
In stark contrast, for the Fisher market case, we derive a Proportional Response (PR) protocol that converges to market equilibrium.
The positive results on the convergence of the PR dynamics are obtained in full generality,
in the sense that they hold for Fisher markets with \emph{any} quasi-linear utility functions.
Conversely, the chaos results for the GA dynamics are established even in the simplest possible setting of two firms and one good,
and they hold for a wide range of price functions with different demand elasticities. Our findings suggest that by considering multi-agent interactions from a market rather than a game-theoretic perspective, we can formally derive natural learning protocols which are stable and converge to effective outcomes rather than being chaotic.
\end{abstract}

\section{Introduction}\label{sec:introduction}
Multi-agent learning in production economies is an important yet underexplored domain. Production economies are classically modelled as Cournot competitions \cite{Varian2010} or imperfectly discriminating all-pay auctions (Tullock contests) \cite{Dip09}. In these models, participating firms have \emph{market power},
and they can significantly influence aggregate outcomes (prices or total exerted effort) with their decisions. However, the advancement of the internet has prompted a rapid paradigm shift in economic competition. Blockchain mining \cite{Arn18,Fia19}, peer-to-peer file sharing~\cite{LLSB08} and crowdsourcing \cite{Hor10}, among others, all constitute \emph{distributed} production economies with large numbers of small competitors (miners, individuals or firms). In contrast to the classic Cournot or Tullock models, firms that participate in these economies typically engage in multiple concurrent competitions. Moreover, due to their relative small sizes, each firm only has negligible influence on prices and hence, become \emph{price-takers}. As a result, this form of competition more closely resembles
the economic model of Fisher markets in which firms take prices as independently given signals, and purchase optimal bundles of goods (or invest on optimal portfolios to produce goods) given their budget (or capital) constraints.\par
The question of which adaptive or learning protocols
%\footnote{In Economics, \qt{adaptive protocols} is the more popular term to describe algorithms like GA and PR, while in AI and ML they are more often described as ``learning rules''. \stef{I think that we can write "...adaptive (learning)..." in the text and avoid this footnote.}} 
behave well in these economies is largely open and still actively researched. In both Cournot competition\footnote{The mathematical equivalence between Cournot competition with isoelastic demand and imperfectly discriminating all-pay auctions with proportional success functions or simply Tullock contests is documented in \cite{Szi97,War18} (among others). We elaborate on this relation in Section~\ref{sec:model}.} and Fisher markets, firms repeatedly observe the aggregate production, and adjust their production outputs over time to improve their own profits. However, empirical results regarding Cournot competition suggest that standard adaptive algorithms, e.g., best response, 
can lead to rather unstable and irregular adjustments, even in very simple instances (e.g., when there are only two firms and one good) \cite{The60,Puu91,War18}. In contrast, when firms ignore their market power and act as price-takers, the outcomes can be more stable. A line of recent works~\cite{WZ2007,Zhang2011,Bir11,CCR12,Cheu19,Che18,BDR19,CheungHN2019} showed that natural adaptive algorithms, including t\^atonnement and proportional response, lead to stable adjustments in many families of Fisher markets, where they converge to \emph{market equilibria}.

\paragraph{Our contribution.}

Motivated by the above, our aim is to study the behavior of learning dynamics in production economies from a theoretical perspective. Our research goals are 1) to establish formal mathematical arguments that explain the irregular behavior of \emph{greedy} learning rules, such as Gradient Ascent and Best Response dynamics, and 2) to seek protocols that behave well under general conditions. In these directions, we make the following two contributions.

Concerning the first goal, we present the first rigorous mathematical proof that the constant step-size \emph{Gradient Ascent (GA)} algorithm can exhibit \emph{Li-Yorke chaos}~\cite{Li75} in Cournot competition (equivalently, in all-pay auctions or Tullock contests) even when the firms are homogeneous. This provides a formal explanation for the \emph{unpredictable} evolution of these systems that is frequently observed in practice. To derive this result, we leverage Sharkovsky's theorem which provides a tractable way to verify the conditions in Li-Yorke's characterization of chaos \cite{Pal17}. In the case of GA, our findings are robust in two aspects: first, chaos emerges for a large family of price functions induced by different demand elasticities, and second, chaos emerges even when the step-size is as small as $\Theta\(1/n\)$. Our results in this direction contribute to the growing literature that studies various forms of chaos in game dynamics. \cite{Sky92,SAF2002,GF2013,CheungP19,CheungP20,CT21,CFMP2020,Bie21,Cho21,Leo21}.

Informally, a dynamical system is Li-Yorke chaotic if there are uncountably many pairs of trajectories which get arbitrarily close together (but never intersect) and move apart indefinitely. When two trajectories are very close to each other, they become essentially indistinguishable due to the precision limitation inherent with the environment or computer. In other words, we cannot tell which of the two trajectories will be realized in the future --- this is exactly what \emph{unpredictable} means. A primary reason for the chaos to arise is that each firm uses its own market power to strategically influence the price. When all firms make such strategic manipulations simultaneously, they aggregately drive prices up and down without proper control. \par
While the previous technique does not lead to a formal proof of Li-Yorke chaos in the case of Best Response (BR) dynamics, we formalize the (in)-stability properties of the latter via eigenvalue analysis of a first-order linear approximation of the non-linear dynamical system. Here, instability refers to abrupt changes in the long term behavior of the dynamics in response to small perturbations of the systems' parameters (e.g., firms costs).\footnote{This formalization closely mirrors the existing empirical results about BR dynamics, see e.g., \cite{Puu91,War18}. Hence, we only retain some intuitive visualizations in the main text (\Cref{fig:spiral_1}), and defer the formal statement to Appendix \ref{app:br}.}\par
Since robustness is an essential property in distributed production economies both from a normative and a descriptive perspective,
the above results provide a convincing argument against the use of \emph{game-theoretically motivated} protocols. This brings us to our second goal which is to seek learning protocols that result in stable outcomes.\par
Our main result in this direction is to propose a \emph{market-motivated} Proportional Response (PR) algorithm and show that it is stable and robust: from any initial condition, the PR update rule converges to the market equilibrium of an ensuing Fisher market that captures production economies, namely Fisher market with quasi-linear utility functions. The protocol is simple and can be run by each firm independently using only local and observable (market level) information, which makes it particularly suitable for these distributed settings. It can be interpreted as a naturally motivated adaptive algorithm from a firm's perspective: in each round, each firm appropriates a certain amount of money, and invests it to the productions of different goods in proportion to the revenues received from selling them in the previous round.\par
One necessary assumption to establish this result is that as economies grow larger, firms have a negligible influence on aggregate outputs (prices or total exerted efforts). However, we formally argue that in the distributed production economy setting, market equilibria are approximate Nash equilibria. This finding is in line with the largeness concept in \cite{Col16}, who showed that when markets grow \emph{large}, they become asymptotically efficient even under agents' strategic behaviors. This implies that the assumption of diminished influence on outcomes does not significantly affect the equilibrium outcome of the system.
%take us far from our starting point.\YKC{Which ``starting point'' you mean?} \stef{Sorry for the ambiguity: Here, I want to make a contradiction between the effect of this assumption on the result (where I want to claim that this is not important since NE are approximate ME) and in the technical part in the next sentence, which starts with "However..." (where I want to claim that it is important). So maybe like this: "This implies that the assumption of diminished influence on outcomes does not move us far from the Nash equilbrium point". Or "This implies that the assumption of diminished influence on outcomes does not significantly affect the equilibrium outcomes of the system". Or as you like.}
However, it does have important implications from a technical perspective. In particular, by modeling production economies as Fisher markets, we can leverage their \emph{Eisenberg-Gale convex-program formulation} \cite{EG59,Eisenberg61} to draw direct analogue between our PR algorithm and standard optimization methods like mirror descent. This allows us to apply tools from optimization theory and provides a principled approach to derive proofs of convergence.

\paragraph{Other Related Work.} The Cournot model dates back to the early 19th century~\cite{Cournot1838}.
Since then, it became a foundation for many models of production economies \cite{Varian2010}. As game theory subsequently matured, the competition between firms was revisited via the popular perspective of Nash equilibrium \cite{Na51}. The mathematical equivalence between Cournot competition and imperfectly discriminating all-pay auctions with proportional success functions or simply Tullock contests \cite{Tul80} when price functions are isoelastic is documented in \cite{Szi97,War18} (among others). The study of markets is also among the most classical topics in Economics, dating back to~\cite{Walras}. Many markets have efficient equilibria\footnote{This was made as a hypothesis by Adam Smith, popularly referred as ``the Invisible Hand''. The famous Fundamental Theorem of Welfare Economics confirmed it analytically.}, which is not true in games. But their assumption of non-strategic (i.e., price-taking) behavior of agents is consistently being challenged. Some recent works remedy this by injecting strategic considerations into the market models~\cite{Adsul2010,Chen2011,Chen2012,Babaioff2014,Branzei2014}. \cite{Col16} showed that when markets grow \emph{large}, they become asymptotically efficient even under agents' strategic behaviors; this largeness concept captures the distributed production economies that we study.\par

\cite{LLSB08} were the first to model peer-to-peer networks as distributed markets. Along with 
\cite{WZ2007} who showed convergence of the PR algorithm to equilibria in these models, they stimulated a sequence of works, already discussed above, on the stability of PR algorithms in various applications. Recently, the study of distributed production economies regained traction in the context of the emerging cryptocurrency markets \cite{Chen19}. However, although critical for their stability, the incentives of miners to allocate resources among multiple markets are still not well understood, \cite{Bis19,Fia19,Gor19}.
 %~\cite{WZ2007,Zhang2011,Dev09,Bir11,CCD2013,Col17,Che18}.
%\cite{CheungHN2019} showed that in some Fisher markets, PR can closely trace the moving market equilibrium when market environmental parameters are varying moderately.

Another line of research focuses on explaining the growth of (production) economies in the long run, rather than equilibration in the short run. Recently, \cite{BMN2018} proposed a dynamical variant of von Neumann's pioneering model on economic growth \cite{VonNeumann1971}. They showed that the use of PR algorithm to exchange produced goods (which are then used as resources for future productions) leads to universal growth of economies under mild conditions on the efficiencies of the firms.

Finally, various forms of chaos in market dynamics are receiving increasingly more attention ~\cite{SAF2002,GF2013,Pal17,CheungP19,CheungP20,CT21,CFMP2020,Bie21,Cho21,Leo21}.

\paragraph{Paper Outline.}

Section~\ref{sec:model} presents our three models: Cournot competition with multiple-goods, Tullock contests and Fisher Markets. We discuss their mathematical connections. Section~\ref{sect:results} presents our main results: convergence of PR dynamics and chaos and instabilities of GA and BR dynamics. 
Detailed proofs are delegated to the appendix, but in Sections \ref{sect:PR} and \ref{sect:liyorke} we discuss the techniques we use.% to establish the main results. 
\section{Models and Definitions}\label{sec:model}

\newcommand{\bbxs}{\bbb^*}
\newcommand{\bbps}{\bbp^\#}

In this section, we describe the Cournot competition and Fisher market models. In their classical descriptions,
\emph{quantities of goods produced} are used as the driving variables to define the notions of Nash and market equilibria.
However, it will be more convenient to use \emph{spendings/investments on the production of a good} as the driving variables here,
since this is the domain of the PR algorithm.
In all models, $N = \{1,2,\cdots,n\}$ is the set of firms (agents) and $M = \{1,2,\cdots,m\}$ is the set of goods.

\paragraph{Multi-good Cournot Competition (CC) with Isoelastic Demands.}
Each firm $i$ invests an amount $b_{ij}\ge 0$ on producing good $j$. We write $\bbb_i := (b_{ij})_{j\in M}$ and $\bbb:=\(\bbb_i\)_{i\in N}$.
Each firm $i$ has only finite amount of capital, $K_i$, to invest, thus it is subject to a capital constraint $\sum_j b_{ij} \le K_i$.
We assume that the marginal cost of producing good $j$ is the same for all firms, which we denote by $\alpha_j$. Thus, the quantity of good $j$ produced by firm $i$ is $b_{ij}/\alpha_j$. Each good $j$ has isoelastic demand, i.e., the total sales revenue of the good is constant, denoted by $v_j$. Thus, the price function\footnote{%In the main body, we stick with this type of price functions, while 
We also consider more general price functions induced by different demand elasticities in \Cref{sect:liyorke}.} for good $j$ is $P_j(\bbb) ~:=~ v_j/\left(\sum_i b_{ij}/\alpha_j\right)$,
and the revenue of firm $i$ received from the sales of good $j$ is
$P_j(\bbb) \cdot (b_{ij}/\alpha_j):= v_j \cdot y_{ij}$, where 
$y_{ij}$ denotes the \emph{market share} of firm $i$ on good $j$:
\begin{equation}\label{eq:csf}
y_{ij} := b_{ij}/\sum_k b_{kj}.
\end{equation}
The profit of firm $i$ is its revenue from the sales of all goods minus its total investment: $\sum_j v_j y_{ij} - \sum_j b_{ij}$.

\paragraph{Tullock Contest (TC).} The above setting admits a correspondence to \emph{multiple Tullock contests}. According to this interpretation, each firm $i$ invests an amount of $b_{ij}\ge 0$ on producing good $j$,
but now the goods are considered as \emph{prizes}, and the probability that firm $i$ wins good $j$ is $y_{ij}$ as defined in~\eqref{eq:csf}.
This probabilistic interpretation is natural in the applications of e.g., blockchain mining and imperfectly discriminating all-pay auctions (crowdsourcing). Now, different firms can have different valuations on the prize, so the parameter $v_j$ in CC may be distinct for different firms;
we let $\v$ denote the valuation of firm $i$ on good $j$. The expected profit of firm $i$ is
\begin{equation}\label{eq:utility_cc}
u_i(\bbb_i):=\sum_j v_{ij} y_{ij} - \sum_j b_{ij}.
\end{equation}
While CC and TC have differences in their rationales, they admit a correspondence in mathematical terms, by replacing deterministic profit in CC with expected profit in TC, and $v_j$ with $\v$ for different firms $i$. Accordingly, we will henceforth refer to this model as CC/TC or simply TC.

\begin{definition}[Nash equillibrium]
For any $\delta\ge 0$, we say that $\bbxs $ is a \emph{$\delta$-Nash Equilibrium} ($\delta$-NE) of a CC/TC if for each agent $i\in N$,
\[\max_{\bbb_i: \sum_j \b \le K_i} u_i(\bbb_i,\bbxs_{-i}) \le (1+\delta)\cdot u_i(\bbxs_i,\bbxs_{-i}).\]
In other words, agent $i$ cannot improve her utility by more than an $\delta$ fraction at $\bbxs$ by unilaterally changing her own investment portfolio. We call a $0$-NE simply a NE.
\end{definition}

\paragraph{Fisher Market (FM).} In a Fisher market, each good $j$ has a supply which is normalized to one unit. Again, $b_{ij}$ denotes the spending of firm $i$ on good $j$, and each firm $i$ has a budget of $K_i$, so the constraint $\sum_j b_{ij} \le K_i$ applies. Let $\bbp=\(p_j\)_{j\in M}$, where $p_j$ denotes the price of good $j$. %\stef{I think that this important observation fits better as a first sentence in the next paragraph (Connection between TC and FM) which discusses precisely this: We note the crucial difference between price functions in FM and in TC: a price function in TC is a function of $\bbb$, while a price in FM is viewed as an independent quantity that does not explicitly depend on $\bbb$.} \par
At $\bbb_i$, firm $i$ gets $b_{ij}/p_j$ units of good $j$ and has a \emph{quasi-linear} utility function,
$u_i\(\bbb_i\mid \bbp\)$, which takes the form
\begin{equation}\label{eq:utility_fm}
u_i(\bbb_i\mid \bbp) = \sum_j \v \cdot (b_{ij}/p_j) - \sum_j b_{ij},
\end{equation}
where $v_{ij}$ denotes firm $i$'s valuation of one unit of good $j$. At price vector $\bbp$, each firm $i$ select an \emph{optimal budget allocation} $\bbb_i^\#$ in $\argmax_{\bbb_i} u_i(\bbb_i \mid \bbp )$ which maximizes its utility subject to the constraint $\sum_j b_{ij} \le K_i$. At an optimal budget vector $\bbb^\#_i$, a vector $\bbx^\#_i:=(b_{ij}^\#/p_j)_{j\in M}$ is called a \emph{production bundle} of agent $i$ at price vector $\bbp$. 

\begin{definition}[Market equilibrium]
A price vector $\bbps = (p^\#_j)_{j\in M}$ is a \emph{market equilibrium (ME)} if there exists an optimal budget allocation $\bbb^\#=(\bbb_i^\#)_{i\in N}$ at $\bbps$, such that for each good $j$, $\sum_i b^\#_{ij} = p^\#_j$. The vector $\bbb^\#$ is called a \emph{market equilibrium spending}.\footnote{The last condition is same as $\sum_i b^\#_{ij}/p_j^\# = 1$, which is the classical definition of market equilibrium.}
\end{definition}

\paragraph{Connection between TC and FM.}
The crucial difference between TC and FM is that in TC, prices are determined endogenously as a function of $\bbb$, whereas in FM, prices are viewed as independent inputs that do not explicitly depend on $\bbb$. Thus, while both models require each firm $i$ to make an allocation $\bbb_i$ that is subject to the same budget constraint $\sum_j b_{ij} \le K_i$, the methods to determine outcomes differ.

However, if $\(\bbp,\bbb\)$ are market equilibrium and market equilibrium spending respectively of an FM, then $\sum_i b_{ij}/p_j = 1$ for each good $j$. Thus, we can translate $b_{ij}/p_j$, which is the quantity of good $j$ that firm $i$ gets at the market equilibrium, to the probability that firm $i$ wins good $j$ in the corresponding TC. Under this translation, the outcome in the FM is the same as the outcome in the TC. Due to the well-known properties of Fisher markets, this outcome is Pareto-optimal, and it is envy-free if $K_i$ is identical for all $i$.

The above suggest that if there is an algorithm that converges to the market equilibrium spending (our \Cref{thm:pr-qlin} establishes this) of the FM, then it yields a feasible solution of the corresponding TC. The remaining question is the quality of this feasible solution, i.e., how close it is to a Nash equilibrium of the TC. It turns out that if the underlying distributed production economy satisfies a natural \emph{largeness} property, then the market equilibrium spending is also a $\delta'$-NE for some small $\delta'>0$. In particular, as we show in \Cref{pr:me-to-ane} below, this is the case if the budget of each firm is small compared to \emph{any} market equilibrium price, i.e., if $\max_{i,j} \{K_i/p^\#_j \}\le \delta$ for a small $\delta>0$.
We may view $\delta$ as a parameter that describes the largeness of the economy: the smaller $\delta$ is, the \emph{larger} the economy is.
(We also need the \emph{bang-per-buck ratio} $\beta_i := \max_j \{v_{ij}/p^\#_j\}$ to be sufficiently high for all firms $i$,
%the utility attained by any firm at the market equilibrium be sufficiently high,
because otherwise a firm might invest nothing %(i.e., ~$\bbb^\#_i$ is the zero vector)
thus attain zero utility, forcing $\delta'$ to be $+\infty$.)

\begin{proposition}\label{pr:me-to-ane}
Suppose that $\bbb^\#$ is a market equilibrium spending vector of a quasi-linear FM, and $\bbp^\#$ is the corresponding market equilibrium price vector.
For every $i\in N$, let $\beta_i := \max_j \{v_{ij}/p^\#_j\}$. % denote the bang-per-buck ratio.
If $\max_{i,j} \{K_i/p^\#_j \}\le \delta$, then $\bbb^\#$ is also a $\delta'$-NE of the corresponding TC, where
$\delta' = \max_{i: \beta_i > 1} \left\{\(\frac{\beta_i}{1-\delta} - 1\)/\(\beta_i-1\)\right\}- 1$,
provided that there is no firm $k$ with $1-\delta < \beta_k < 1$. 

\end{proposition}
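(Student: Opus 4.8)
My plan is to exploit that, under quasi-linear utilities, the Fisher-market objective is \emph{linear} in the spending vector, so that both the equilibrium payoff and the value of the best TC deviation have clean closed forms and the whole claim reduces to comparing them. First I would record that at $\bbb^\#$ the two models agree payoff-wise: since $\bbb^\#$ is a market equilibrium spending, markets clear, so $\sum_k b_{kj}^\# = p_j^\#$ and the TC win probability equals $y_{ij}^\# = b_{ij}^\#/p_j^\#$, which is exactly the quantity firm $i$ consumes in the FM. Substituting into \eqref{eq:utility_cc} gives $u_i(\bbb^\#) = \sum_j (v_{ij}/p_j^\# - 1)\,b_{ij}^\#$, i.e.\ the FM utility $u_i(\bbb_i^\# \mid \bbp^\#)$ of \eqref{eq:utility_fm}. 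Because $\bbb_i^\#$ is an optimal budget allocation at $\bbp^\#$ and this objective is linear over the budget set $\{\bbb_i \ge 0 : \sum_j b_{ij} \le K_i\}$, its maximizer places the whole budget on a good with the largest bang-per-buck $v_{ij}/p_j^\# = \beta_i$ when $\beta_i > 1$ and spends nothing when $\beta_i \le 1$; hence the equilibrium payoff is $u_i(\bbb^\#) = \max\{0,\, K_i(\beta_i - 1)\}$.

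Next I would bound the gain from a unilateral TC deviation from above. Fixing firm $i$ and a feasible $\bbb_i$ while opponents retain $\bbb_{-i}^\#$, good $j$ receives the extra mass $\sum_{k \ne i} b_{kj}^\# = p_j^\# - b_{ij}^\#$. The decisive estimate is that a small firm hardly moves the denominator: since $b_{ij}^\# \le K_i \le \delta p_j^\#$, the opponents' mass is at least $(1-\delta)p_j^\#$, so $\frac{b_{ij}}{b_{ij} + p_j^\# - b_{ij}^\#} \le \frac{b_{ij}}{(1-\delta)p_j^\#}$. Inserting this into \eqref{eq:utility_cc} dominates the deviation payoff by the \emph{linear} function $\sum_j\big(\frac{v_{ij}}{(1-\delta)p_j^\#} - 1\big)b_{ij}$, whose maximum over the same budget set is once more attained at a best bang-per-buck good, yielding $\max_{\bbb_i} u_i(\bbb_i, \bbb_{-i}^\#) \le \max\{0,\, K_i(\tfrac{\beta_i}{1-\delta} - 1)\}$.

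The conclusion then follows by dividing the deviation bound by the equilibrium payoff. When $\beta_i > 1$ both are positive and their ratio is $(\frac{\beta_i}{1-\delta} - 1)/(\beta_i - 1) \ge 1$, and taking the worst such firm produces exactly the stated $\delta'$; when $\beta_i \le 1 - \delta$ both the equilibrium payoff and the deviation bound vanish, so the $\delta'$-NE inequality holds trivially. I expect the only genuine obstacle to be the degenerate regime $1 - \delta < \beta_i < 1$: there the equilibrium payoff is $0$ (as $\beta_i < 1$) while the deviation bound $K_i(\frac{\beta_i}{1-\delta} - 1)$ is strictly positive (as $\beta_i > 1 - \delta$), so the multiplicative ratio blows up to $+\infty$. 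This is precisely the pathology the hypothesis rules out by assuming no firm $k$ has $1-\delta < \beta_k < 1$, and the bulk of the remaining care is bookkeeping to confirm that the two linear maximizers select the same active good so the closed forms are directly comparable.
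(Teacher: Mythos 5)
Your proposal is correct and takes essentially the same route as the paper's own proof: the same closed form $\max\{0,\,K_i(\beta_i-1)\}$ for the equilibrium payoff, the same largeness estimate $\sum_{k\neq i} b^\#_{kj} \ge (1-\delta)p^\#_j$, the same linear majorant giving the deviation bound $\max\{0,\,K_i(\beta_i/(1-\delta)-1)\}$, and the same three-case ratio comparison, including the excluded regime $1-\delta<\beta_i<1$ where the multiplicative ratio becomes infinite. The only (cosmetic) difference is that you bound the win probability directly via $y_{ij}\le b_{ij}/((1-\delta)p^\#_j)$, whereas the paper bounds the marginal utility $\partial u_i^{\mathsf{CC}}/\partial b_{ij}$ and then passes from the gradient bound to the payoff bound; both steps produce the identical linear upper bound, and yours is arguably the more elementary of the two.
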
 \noindent
%Parameter $\delta$ indicates the largeness of the economy; when $\delta$ is small, every firm has little market share.
%\stef{\Cref{pr:me-to-ane} suggests that $\delta$ is a parameter that describes the largeness of the economy: the smaller $\delta$, the \emph{larger} the economy is. Moreover, } 
It is easy to see that if $\min_i \{\beta_i\}$ grows, then $\delta'$ tends toward $\delta/(1-\delta)$.
%$\beta_i>1$ for all $i\in N$, then $\delta \searrow 0$ implies that $\delta'\searrow 0$.
The proof of \Cref{pr:me-to-ane} can be found in \Cref{app:equilibria}.
%Concerning its proof, \Cref{pr:me-to-ane} relies on the characterization of Nash and market equilibria via first-order conditions of the ensuing convex programs (\Cref{prop:foc,prop:foc2}) and is provided in \Cref{app:equilibria}.
\section{Our Main Results}\label{sect:results}

We present our two main results here. We discuss the methodology of proving them in Sections~\ref{sect:PR} and \ref{sect:liyorke}.

\paragraph{Proportional Response (PR) in Quasi-linear Fisher Market.} In a quasi-linear Fisher market,
our PR protocol starts with each firm $i\in N$ investing an arbitrary portfolio $\bbb_i^\circ$ which is \emph{positive},
i.e., $b_{ij}^\circ>0$ for all $j\in M$.
%Let $w_i^t := K_i - \sum_j b_{ij}^t$ denote the amount of capital left uninvested in round $t$.
In each round, firms update their portfolios simultaneously according to the \textsf{PR-QLIN} protocol in Algorithm~\ref{alg:pr_qlin}.

The \textsf{PR-QLIN} protocol can be naturally interpreted.
After all firms update their investment portfolios in round $t$, one unit of each good $j$ is allocated to the firms
in proportion to their investments on the good. Thus, firm $i$ gets $y_{ij}^t$ units of good $j$ (line 3).
Then each firm $i$ computes its attained utility, $S_i^t$, without subtracting investment cost (line 5).
If $S_i^t > K_i$, then firm $i$ will appropriate all of its capital, $K_i$, for investment in round $t+1$;
otherwise it will only appropriate an amount of $S_i^t$ for investment. %; in other words, $w_i^{t+1} = K_i - S_i^t$.
Then each firm invests its appropriated capital on each good in proportion to the utility attained from that good in the previous round,
i.e., firm $i$ invests a fraction of $\v y_{ij}^t/S_i^t$ of its appropriated capital on good $j$.
Our main result is stated below.

\begin{theorem}\label{thm:pr-qlin}
Given any positive starting point $\bbb^\circ$, the algorithm \emph{\textsf{PR-QLIN}} converges to the set of market equilibrium spending vectors
of the quasi-linear Fisher market.
\end{theorem}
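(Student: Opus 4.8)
The plan is to recast \textsf{PR-QLIN} as mirror descent on an Eisenberg--Gale/Shmyrev-type convex program whose minimizers are exactly the market equilibrium spending vectors, and then run the standard ``monotone improvement plus vanishing step differences'' argument to get convergence to that optimal set. Concretely, I would first write a convex potential $\Phi(\bbb)$ over the feasible spending polytope $\{\bbb\ge 0 : \sum_j \b \le K_i \ \forall i\}$ of Shmyrev type, combining a negative-entropy term $\sum_j B_j\log B_j$ (with $B_j := \sum_i \b$) and a linear term $-\sum_{i,j}\b\log\v$, augmented to account for the quasi-linear leftover money. The cleanest way to handle the leftover is to give each firm $i$ a private ``dummy good'' valued at $1$ and priced at $1$, so that the quasi-linear market becomes a linear Fisher market in which unspent budget is ``spent'' on the dummy; this same device explains the cap $\min(S_i^t,K_i)$ in the update. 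I would then verify through the KKT conditions that the Lagrange multipliers of the supply constraints recover equilibrium prices $p_j^\#$, and that complementary slackness is precisely the bang-per-buck optimality and market-clearing conditions, so that $\argmin\Phi$ coincides with the set of market equilibrium spending vectors.

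Second, I would show that one \textsf{PR-QLIN} round is exactly one proximal/mirror-descent step $\bbb^{t+1}=\argmin_{\bbb}\ \inner{\nabla\Phi(\bbb^t)}{\bbb}+\KL(\bbb\,\|\,\bbb^t)$ under the entropy mirror map, by checking that the first-order optimality conditions of this subproblem reduce in closed form to $b_{ij}^{t+1}=\min(S_i^t,K_i)\cdot \v y_{ij}^t/S_i^t$, with $S_i^t=\sum_j \v y_{ij}^t$. The feature that distinguishes this from the classical linear case is the budget cap: when $S_i^t\le K_i$ the firm deliberately underspends, which is exactly the quasi-linear ``do not pay more than the value received'' condition, and which the dummy-good reduction encodes automatically.

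Third, using convexity of $\Phi$ together with the three-point identity for the KL Bregman divergence, I would establish an \emph{exact} descent inequality $\Phi(\bbb^t)-\Phi(\bbb^{t+1})\ge \KL(\bbb^{t+1}\,\|\,\bbb^t)\ge 0$, where the proportional-response structure makes the linearization tight so the descent is exact rather than up to a step-size error. Since $\Phi$ is bounded below on the compact feasible region and decreases monotonically, $\Phi(\bbb^t)$ converges and the increments vanish, forcing $\KL(\bbb^{t+1}\,\|\,\bbb^t)\to 0$. I would then argue that the iterates remain strictly positive and bounded, so that every limit point is a fixed point of the \textsf{PR-QLIN} map; since fixed points satisfy the KKT conditions of $\Phi$, they are market equilibrium spending vectors, and I conclude that the distance from $\bbb^t$ to the (convex, compact) optimal set tends to zero.

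The hard part will be the combination of Steps two and three: pinning down the correct quasi-linear convex program and proving that the \emph{capped} proportional-response update is exactly mirror descent on it with an exact (not merely approximate) descent guarantee. The cap $\min(S_i^t,K_i)$ and the attendant possibility of underspending are precisely what is absent from the linear-market proportional-response analyses, and handling them---together with the fact that equilibrium spendings, unlike equilibrium prices, need not be unique, so convergence must be phrased for the whole optimal set---is where the quasi-linear setting demands genuinely new bookkeeping.
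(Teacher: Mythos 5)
Your overall architecture coincides with the paper's: you identify the same Shmyrev-type convex program over spendings (your ``dummy good'' for leftover budget is exactly the paper's slack variable $w_i$ in program \eqref{eq:shmyrev-quasilinear}, which its Proposition~\ref{prop:foc2} likewise interprets as an $(m+1)$-st good with constant bang-per-buck equal to one), and the same identification of one \textsf{PR-QLIN} round with one unit-step KL mirror-descent step, verified through the first-order conditions of the prox subproblem, with the cap $\min(S_i^t,K_i)$ falling out of the two cases for $w_i$. The gap is in your third step. The descent inequality $\Phi(\bbb^t)-\Phi(\bbb^{t+1})\ge \KL(\bbb^{t+1}\,\|\,\bbb^t)$ does \emph{not} follow from convexity of $\Phi$ plus the three-point identity. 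Convexity only lower-bounds $\Phi$ by its linearization; what the descent argument actually consumes is the reverse, smoothness-type bound $\Phi(\bbb')-\Phi(\bbb)-\inner{\nabla\Phi(\bbb)}{\bbb'-\bbb}\le\KL(\bbb'\,\|\,\bbb)$ for all feasible $\bbb,\bbb'$. Without that upper bound, fixed-step mirror descent on a merely convex objective can overshoot and need not decrease the objective at all, so ``the proportional-response structure makes the linearization tight'' cannot be taken for granted: it is not that the linearization is tight, but that the linearization error is dominated by the KL prox term, and this must be proved as a property of the objective.

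That missing inequality is precisely the paper's key technical result, the $1$-Bregman convexity of $F$ (\Cref{lem:one-Bregman-convex}), and its proof is the one genuinely market-specific computation in the whole argument: the Bregman remainder of $F$ collapses exactly to $\KL(\bbp'\,\|\,\bbp)$, because the nonlinear part of the objective depends on spendings only through the prices $p_j=\sum_i b_{ij}$ (the $w_i$ part is linear and the $\ln v_{ij}$ part cancels against the gradient term), and then $\KL(\bbp'\,\|\,\bbp)\le\KL(\bbb'\,\|\,\bbb)$ because KL divergence contracts under aggregation of coordinates (the refinement property). Once you have this lemma, your monotone-descent-plus-limit-point ending does go through, though it is more roundabout than necessary and still owes some care about continuity of the update map near the boundary where coordinates of $\bbb^t$ may vanish. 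The paper instead plugs the lemma directly into the generic mirror-descent guarantee for $L$-Bregman convex objectives (Theorem~\ref{thm::plain::convex} with $\Gamma=L=1$), obtaining $F(\bbb^t)-F(\bbb^*)\le \KL(\bbb^*\,\|\,\bbb^\circ)/t$, which yields convergence to the optimal set with an explicit rate and with no fixed-point analysis at all.
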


\begin{algorithm}[!htb]
\caption{PR-QLIN Learning Protocol}\label{alg:pr_qlin}
%\textbf{Input (local level):} $\(K_i,v_{ij},\bbb_i^t\)=$ firm $i$'s budget constraint, valuation of good $j$ and own spending at $t\ge 0$. \\
%\textbf{Input (market level):} $\bbb^t=$ aggregate spending at $t\ge 0$.\\
\textbf{Input:} $\(K_i,v_{i1},v_{i2},\ldots,v_{im},\bbb_i^\circ\)$ for each firm $i$\\
\textbf{Output:} market equilibrium spending $\bbb^\#$.\\[-0.3cm]
\begin{algorithmic}[1]
%\Initialize {$t\gets 0$}
\For{$t=0,1,2,\ldots$}
%\State{$w_i^t\gets K_i-\sum_{j}\b^t$}
\For{every firm $i$ and good $j$}
\State{$y_{ij}^t\gets \b^t/\sum_{k}b_{kj}^t$}
\EndFor
\For{every firm $i$}
\State{$S_i^t\gets \sum_{j}\v y_{ij}^t$} 
\For{every good $j$}
\If {$S_i^t>K_i$}
\State {$\b^{t+1}\gets (\v y_{ij}^t / S_i^t)\cdot K_i$}
\Else \State {$\b^{t+1}\gets \v y_{ij}^t$} $\Comment{\text{same as }(\v y_{ij}^t / S_i^t)\cdot S_i^t}$
\EndIf
\EndFor
\EndFor
\EndFor
\end{algorithmic}
\end{algorithm} 

\paragraph{Gradient Ascent Dynamics and Li-Yorke Chaos.} To establish our chaos results of the GA dynamics in CC (hence, also in TC),
we consider a CC with one good and $n$ firms. Since there is only one good, we omit the subscript $j=1$ and use the shorthand $\alpha \equiv \alpha_1$ to denote the marginal cost of producing the good (recall from \Cref{sec:model} that this is equal for all firms). In this setting,
it is more convenient to use the quantities of the good produced, i.e., the variables $x_i = b_{i1} / \alpha$, as the driving variables.
%Suppose each firm $i$ has marginal cost of $\alpha_i$.
%Let $\bbx=(x_i)_{i\in N}$. %, where $x_i$ denotes the quantity of the good produced by firm $i$. 
Without loss of generality, let $v_1 = 1$.
Then the utility of firm $i$ is $u_i(\bbx) = x_i / (\sum_k x_k) - \alpha x_i$.
The \emph{Gradient Ascent (GA)} update rule is given by
$x_i^{t+1} \la x_i^t+\eta \cdot \nabla_{i}u_i\(\bbx^t\)$, where $\eta$ is the step-size.\par
Assuming that the initial point is \emph{symmetric}, i.e., %i.e., $\alpha_i$ are identical for all $i$, and 
that $x_i^\circ$ is identical for all $i$, then, in each round $t>0$, the $x_i^t$'s remain identical for all $i$.
Thus, a symmetric GA dynamic is essentially one-dimensional, and its trajectory can be represented by the sequence $\{x_1^t\}_{t\ge 0}$
generated by the GA update rule:
\begin{equation}\label{eq:grad_ascent}
\textstyle x_1^{t+1} \la x_1^t+\eta \cdot \(\frac{n-1}{n^2 x_1^t}-\alpha\) .
\end{equation}
Our main result states that even for such an apparently simple one-dimensional dynamical system, \emph{chaos} occurs with step-size $\eta$ as small as $\Theta(1/n)$.
Here, we refer to \emph{Li-Yorke chaos} which is formally defined below.
\begin{definition}[Li-Yorke Chaos]\label{def:liyorke}
A discrete time dynamical system $\(x^t\)_{t\in \mathbb N}$ such that $x^{t}:=f^t\(x^\circ\)$ for a continuous update rule $f: X \to X$ on a compact set $X\subseteq \mathbb R$ is called \emph{Li-Yorke chaotic}, if (i) for each $k \in \mathbb N$, there exists a periodic point $\hat x \in X$ with period $k$, and (ii) there is an uncountably infinite set $S\subset X$ that is \emph{scrambled}, i.e., if for each $x\neq x' \in S$ it holds that $\lim\inf_{t\to \infty} |f^t\(x\)-f^t\(x'\)|=0<\lim\sup_{t\to\infty}|f^t\(x\)-f^t\(x'\)|$.
\end{definition}

\begin{theorem}[Li-Yorke Chaos in $n$-Player CC/TC]\label{thm:liyorke}
Consider a symmetric GA dynamic with $n$ firms and marginal cost $\alpha > 0$.
Then for any step-size $\eta \ge 3(n-1)/n^2\alpha^2$, the essentially-one-dimensional dynamical system \eqref{eq:grad_ascent} is Li-Yorke chaotic.
\end{theorem}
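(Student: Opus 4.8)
The plan is to collapse the two free parameters $n,\alpha$ and the step-size $\eta$ into a single parameter by rescaling, reduce to a clean unimodal one-parameter family, and then produce a period-three orbit so that the Sharkovsky/Li--Yorke machinery delivers Definition~\ref{def:liyorke} directly. First I would record the shape of the one-dimensional update $f(x) := x + \eta\left(\frac{n-1}{n^2 x} - \alpha\right)$ driving \eqref{eq:grad_ascent}: it has a unique positive fixed point $x^\ast = \tfrac{n-1}{n^2\alpha}$ (the Cournot/Nash equilibrium), and is strictly decreasing then increasing with a single interior minimum at $x_c = \sqrt{\eta(n-1)/n^2}$. Rescaling by $u = x/x^\ast$ conjugates $f$ to the one-parameter map
\[
g_\lambda(u) \;=\; u + \lambda\left(\tfrac{1}{u} - 1\right), \qquad \lambda := \frac{\eta\,\alpha^2 n^2}{n-1},
\]
so that the hypothesis $\eta \ge 3(n-1)/(n^2\alpha^2)$ is exactly $\lambda \ge 3$, independently of $n$ and $\alpha$. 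This reduces the theorem to a single statement about $g_\lambda$.

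Second, I would establish the qualitative features of $g_\lambda$ that force folding: the fixed point sits at $u=1$ with $g_\lambda'(1) = 1-\lambda \le -2$, hence it is repelling with slope past $-1$; and $g_\lambda$ is strictly unimodal with minimum at $u_c = \sqrt{\lambda}$ and minimum value $g_\lambda(u_c) = 2\sqrt{\lambda}-\lambda$. I would then pin down a compact interval $X \subset (0,\infty)$ on which to run the argument, built around the fixed point and the first few iterates of the period-three witness.

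Third --- the heart of the proof --- I would exhibit a point $a \in X$ realizing the Li--Yorke configuration $g_\lambda^3(a) \le a < g_\lambda(a) < g_\lambda^2(a)$ (or its mirror image), equivalently a pair of subintervals $A,B$ with covering relations $g_\lambda(A) \supseteq A\cup B$ and $g_\lambda(B) \supseteq A$, which force a period-three orbit by the intermediate value theorem. The required fold is supplied by the unimodal minimum: starting just above $u_c$ the orbit descends on the gentle right branch, while a suitable point on the steep left branch is thrown back above $a$. I would verify the resulting finite list of algebraic inequalities in $\lambda$ at the threshold $\lambda=3$ and propagate them to all $\lambda \ge 3$ via monotonicity of the relevant quantities (or an explicit $\lambda$-dependent choice of $a$). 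Once a period-three orbit exists, Sharkovsky's theorem yields periodic points of every period (condition (i) of Definition~\ref{def:liyorke}), and the Li--Yorke theorem yields the uncountable scrambled set (condition (ii)).

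The main obstacle I anticipate is Step three, namely producing the period-three orbit \emph{uniformly} over the unbounded range $\lambda \ge 3$. Two issues compound. First, the inequality chain must hold for all $\lambda \ge 3$, not merely at the threshold, so I need the correct monotone auxiliary quantities rather than a single numerical witness. Second, as $\lambda$ grows the minimum value $g_\lambda(u_c) = 2\sqrt{\lambda}-\lambda$ decreases and eventually becomes negative, so the orbit and the ambient interval $X$ threaten to leave $(0,\infty)$, where $f$ is singular at $u=0$; keeping the constructed itinerary on the positive reals --- working with covering relations on carefully chosen subintervals that avoid mapping across the singularity, rather than assuming a single globally forward-invariant interval --- is the delicate part of the verification.
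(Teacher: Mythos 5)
Your proposal follows essentially the same route as the paper's proof, and your rescaling is a cleaner packaging of it. The paper works with $n=2$ (asserting the general case is similar), sets $f(x)=x+\eta\left(\tfrac{1}{4x}-\alpha\right)$, and parametrizes $\eta = c/(4\alpha^2)$ with $c\in[3,4)$; this $c$ is exactly your $\lambda$, and your conjugation $u=x/x^\ast$ shows at once that the dependence on $n$ and $\alpha$ is removable (the paper's lemmas are stated only for $\alpha\ge 1$ and $n=2$). The paper then builds the invariant interval $I(\eta)=[L(\eta),U(\eta)]$ with $L(\eta)=f(\sqrt{\eta}/2)$ and $U(\eta)=f^{(2)}(\sqrt{\eta}/2)$ (Lemma~\ref{lem:invariant}), establishes the sign conditions $f^{(3)}(\sqrt{\eta}/2)\ge \sqrt{\eta}/2 > 0 > f^{(3)}(U(\eta))-U(\eta)$ by symbolic computation (Proposition~\ref{lem:periodic_3}), extracts a period-3 point by the intermediate value theorem, and invokes Li--Yorke --- precisely your steps two and three.

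Two caveats, both touching exactly the regimes you flagged. First, your plan to ``verify at the threshold $\lambda=3$ and propagate by monotonicity'' fails as literally stated: the paper's own computation shows that $f^{(3)}(\sqrt{\eta}/2)-\sqrt{\eta}/2$, viewed as a function of $c=\lambda$, is increasing but \emph{negative} for $c<3.00918$, so the natural witness (the critical point) does not work at or just above $\lambda=3$. The paper itself only proves the period-3 claim for $c\in[3.00918,4)$ and asserts, without carrying it out, that a ``more involved'' evaluation with a different witness handles $[3,3.00918)$ --- this is your hedged ``$\lambda$-dependent choice of $a$'', and it is not a routine step. Second, the regime $\lambda\ge 4$ is genuinely outside the paper's proof: Lemma~\ref{lem:invariant} requires $L(\eta)>0$, i.e.\ $\eta<1/\alpha^2$, i.e.\ $\lambda<4$, and the paper says nothing about larger step sizes even though Theorem~\ref{thm:liyorke} claims all $\eta\ge 3(n-1)/(n^2\alpha^2)$. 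Your survivor-set/covering-relation idea (restricting to itineraries that never fall into the escape region around the minimum) is the right kind of tool there, but neither you nor the paper executes it; so on this point your proposal is more candid about what remains to be proved than the paper's own argument, which silently covers only $\lambda\in[3,4)$ --- and rigorously only $[3.00918,4)$.
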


This theorem applies with isoelastic price function. In Section \ref{sect:liyorke}, %\stef{Maybe we want to indeed move all this to the Appendix (to make some space here). See also Section 5 where I highlight the corresponding paragraph that we can remove.}, 
we consider a larger family of price functions and prove that Li-Yorke chaos also occurs in the corresponding symmetric GA dynamics. We also present theoretical and empirical evidences that instability arises when the GA rule is replaced by the Best Response rule.
\begin{remark}
In practice, firms may choose to use a large step-size in a myopic, greedy approach to profit maximization.
Given that chaos occurs with a vanishingly small step-size $\Theta\(1/n\)$ as the number of firms increases (cf. \Cref{thm:liyorke}),
our result is practically relevant for distributed production economies in which many small firms are involved.
Stability results should be possible for smaller step sizes, however, such step sizes are not particularly interesting from a practical perspective. Finally, the presence of a centralised planner who may enforce small step sizes is a rather unnatural assumption for the settings and applications that we consider. 
\end{remark}
\section{Proportional Response Dynamics}\label{sect:PR}

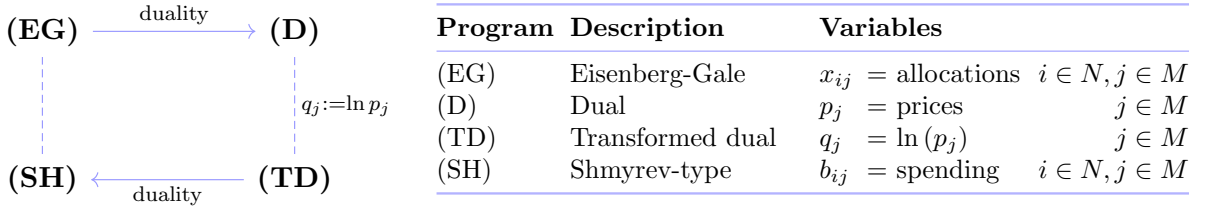
\begin{figure*}[!htb]
\tikzcdset{row sep/normal=1.3cm}
\tikzcdset{column sep/normal=2cm}
\begin{minipage}[t]{0.34\textwidth}
\centering
\begin{tikzcd}
\textbf{(EG)} \arrow[r, blue!40, "\text{duality}" black]
& \textbf{(D)} \arrow[d, dashed, no head, blue!40, "q_j:=\ln{p_j}" black] \\
\textbf{(SH)} \arrow[u, blue!40, dashed, no head]
& \textbf{(TD)} \arrow[l, blue!40,"\text{duality}" black]
\end{tikzcd}
\end{minipage}\hspace{7pt}
\begin{minipage}[t]{0.64\textwidth}
\centering
\arrayrulecolor{blue!30}
\setlength{\tabcolsep}{8pt}
\small
\begin{tabular}{@{}l@{\hspace{6pt}}ll@{}c@{}l@{}r@{}}
\toprule
\textbf{Program} & \textbf{Description} &\multicolumn{4}{l}{\textbf{Variables}}\\
\midrule
(EG) & Eisenberg-Gale & $\x $ &\, $ =$\,\, & allocations \,\,& $i\in N, j\in M$\\
(D) & Dual & $p_j$ &\, $ =$\,\,& prices & $j\in M$\\
(TD) & Transformed dual & $q_j$ &\, $ =$\,\, & $\ln{\(p_j\)}$ & $j\in M$\\
(SH) & Shmyrev-type & $\b$ &\, $  =$\,\, &spending & $i\in N, j\in M$\\
\bottomrule
\end{tabular}
\end{minipage}
\caption{Convex programs in the derivation of the \textsf{PR-QLIN} protocol via the Mirror Descent (MD) protocol. Starting from the dual (D) of a generalized Eisenberg-Gale convex program (EG), we go to the transformed dual (TD) and by convex duality to a Shmyrev-type primal program (SH) which is hence, equivalent to the initial program (EG). The objective functions of (SH) for quasi-linear utilities is 1-Bregman convex which implies convergence of the MD protocol. }
\label{fig:derivation}\vspace*{-0.1cm}
\end{figure*}

Our proof of Theorem~\ref{thm:pr-qlin} consists of two major steps. In the first, we derive a convex program that captures the market equilibrium (ME) spending of the quasi-linear Fisher market via the approach of \cite{Bir11,Col17,Che18}. In the second, we show that a general Mirror Descent (MD) algorithm converges to the optimal solution of this convex program; \textsf{PR-QLIN} is an instantiation of this MD algorithm.

\paragraph{Convex Program Framework.} %To prove Theorem~\ref{thm:pr-qlin}, w
We first utilize a convex optimization framework to derive a convex program that captures the ME spendings of any quasi-linear FM.
The ensuing framework is summarized in Figure~\ref{fig:derivation}. In short, via duality and variable transformations, the market equilibria of a FM can be captured by various convex programs, each with a different domain.\footnote{For linear Fisher markets, i.e. markets in which each agent has a utility similar to a quasi-linear utility, but without the subtraction of investment cost, \cite{EG59} derived a convex program which captures the ME allocation, where the driving variables are quantities of goods allocated to the agents. Subsequent works %, e.g. \cite{Dev09,Col17,Che18} 
established that by considering suitable duals and transformations of Eisenberg and Gale's convex program, new convex programs can be derived which capture the ME prices and ME spendings.}
Our starting point is a convex program proposed by \cite{Dev09} that captures ME prices of quasi-linear Fisher market (which belongs to type (D) in Figure~\ref{fig:derivation}). From this, we derive a new convex program with captures the ME spendings of the market (which belongs to type (SH));
see \Cref{app:quasi-linear-ces} for the details. The convex program is%Its exact formulation is
\begin{alignat*}{3}
\min_{\bbb,\bbw,\bbp} &~~-\sum_{i=1}^n\sum_{j=1}^m\b\ln{v_{ij}} +\sum_{i=1}^n w_i +\sum_{j=1}^m p_j\ln{p_j} \\
\text{s.t.}\,\,& \sum_{i=1}^n \b = p_j,\hspace{38pt} \forall j\in M,\\
& \sum_{j=1}^m \b+w_i= K_i,~~~ \forall i\in N,\tag{SH}\\
& \b,w_i\ge 0, \hspace{47pt}\forall i\in N, j\in M.
\end{alignat*}
For convenience, we will write $F(\bbb,\bbw,\bbp):=-\sum_{i=1}^n\sum_{j=1}^m\b\ln{v_{ij}} +\sum_{i=1}^n w_i +\sum_{j=1}^m p_j\ln{p_j}$.
Observe that the first and second constraints determine the values of $\bbw,\bbp$ in terms of $\b$'s.
Thus, we can rewrite the convex program to have variables $\bbb$ only,
and the remaining constraints are $\b\ge 0$ and $\sum_{j=1}^m \b \le K_i$;
we slightly abuse notation by using $F(\bbb)$ to denote the objective of this convex program.

\paragraph{From Mirror Descent to Proportional Response.} After having the convex program with variables $\bbb$ only,
we can compute a ME spending by the optimization algorithm of Mirror Descent (MD).
To begin, we recap a general result about MD~\cite{CT93,Bir11}.

\begin{definition}[KL-divergence and $L$-Bregman convexity]
Let $C$ be a compact and convex set and let $h$ be a convex function on $C$. Then, for any $\bbz'\in C, \bbz \in \rint(C)$ where $\rint(C)$ is the relative interior of $C$, the \emph{Bregman divergence}, $d_h\(\bbz',\bbz\)$, generated by $h$ is defined by
\[d_h(\bbz',\bbz) := h(\bbz') - \left[h(\bbz) +\inner{\nabla h(\bbz)}{\bbz'-\bbz}\right].\]
The \emph{Kullback-Leibler (KL) divergence} between $\bbz'$ and $\bbz$ is defined by
$\KL(\bbz' \| \bbz) := \sum_j z'_j \cdot \ln \frac{z'_j}{z_j} - \sum_j z'_j + \sum_j z_j$,
which is the same as the Bregman divergence $d_h$ with regularizer $h\(\bbz\) := \sum_j (z_j \cdot \ln z_j - z_j)$. A function $F$ is \emph{$L$-Bregman convex} w.r.t. the Bregman divergence $d_h$ if
for any $\bbz'\in C$ and $\bbz\in \rint(C)$, $F(\bbz) + \inner{\nabla F(\bbz)}{\bbz'-\bbz}\le f(\bbz') \le f(\bbz) + \inner{\nabla f(\bbz)}{\bbz'-\bbz} + L \cdot d_h(\bbz',\bbz)$.
\end{definition}

%Due to convexity of the function $h$, $d_h(\bbz',\bbz)$ is non-negative and convex in $\bbz'$.
For the problem of minimizing a convex function $F(\bbz)$ subject to $\bbz\in C$,
the MD protocol w.r.t.~the KL divergence is presented in Algorithm~\ref{alg:md}.
In the protocol, $1/\Gamma$ is the step-size, which may vary with $t$ (and typically diminishes with $t$). However, in the current application of distributed dynamics, a time-varying step-size %(and thus, update rule) 
is undesirable or even impracticable, since it requires firms to keep track of a global clock.

%\YKC{In line 2 of the algorithm below, I am certain that it should be $\Gamma\cdot \KL(\bbz\|\bbz^t)$ instead of $\KL(\bbz\|\bbz^t)/\Gamma$.} \stef{Ok, sure: can you please also check that $\Gamma=1/L$ makes sense in the statement of Theorem 8 (and then delete the comments)? Thanks.}

\begin{algorithm}[!htb]
\caption{MD protocol w.r.t. $\KL$-divergence}\label{alg:md}
\textbf{Input:} A convex set $C$, a function $F$ defined on $C$, a parameter $\Gamma$ and a point $\bbz^\circ \in C$.\\% at $t>0$. \\
\textbf{Output:} $\bbz^*=\argmin_{\bbz \in C} F(z)$. 
\begin{algorithmic}[1]
%\Initialize {$\bbz^\circ \in C$}
\For{$t=0,1,2,\ldots$}
\State{$g\(\bbz,\bbz^t\)\gets \inner{\nabla F(\bbz^t)}{\bbz-\bbz^t}+\Gamma\cdot \KL(\bbz\|\bbz^t)$}
\State{$\bbz^{t+1}\la \argmin_{\bbz\in C}\{g\(\bbz,\bbz^t\)\}$}
\EndFor
\end{algorithmic}
\end{algorithm}

\begin{theorem}%{\cite{Bir11}} 
\label{thm::plain::convex}
Suppose $F$ is an $L$-Bregman convex function w.r.t. the Bregman divergence $d_h$,
and $\bbz^t$ is the point reached after $t$ applications of the MD update rule in Algorithm~\ref{alg:md} with parameter $\Gamma = L$. Then
$F(\bbz^t) -F(\bbz^*) ~\le~ L \cdot d(\bbz^*, \bbz^\circ)/t,$
where $\bbz^*=\argmin_{\bbz \in C} F(z)$. 
\end{theorem}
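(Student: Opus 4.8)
The plan is to prove the standard $\O(1/t)$ convergence rate for Mirror Descent by establishing a one-step progress inequality of the form $F(\bbz^{t+1}) - F(\bbu) \le L\,\bigl[\,d_h(\bbu,\bbz^t) - d_h(\bbu,\bbz^{t+1})\,\bigr]$, valid for every $\bbu \in C$, and then telescoping. Three ingredients combine to give this inequality: the first-order optimality condition of the update in Algorithm~\ref{alg:md}, the three-point identity for Bregman divergences, and the two-sided $L$-Bregman convexity hypothesis on $F$.

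First I would record the three-point identity: for any $\bbu,\bbz,\bbz'$ in $\rint(C)$, a direct expansion of the definition of $d_h$ gives $\inner{\nabla h(\bbz) - \nabla h(\bbz')}{\bbu - \bbz'} = d_h(\bbu,\bbz') + d_h(\bbz',\bbz) - d_h(\bbu,\bbz)$. Next, since $\bbz^{t+1}$ minimizes the strongly convex function $g(\cdot,\bbz^t) = \inner{\nabla F(\bbz^t)}{\cdot-\bbz^t} + \Gamma\cdot\KL(\cdot\|\bbz^t)$ over $C$, the variational optimality condition reads $\inner{\nabla F(\bbz^t) + \Gamma(\nabla h(\bbz^{t+1}) - \nabla h(\bbz^t))}{\bbu - \bbz^{t+1}} \ge 0$ for all $\bbu \in C$. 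Applying the three-point identity with $\bbz = \bbz^t$ and $\bbz' = \bbz^{t+1}$ converts this into an upper bound on $\inner{\nabla F(\bbz^t)}{\bbz^{t+1} - \bbu}$ in terms of the three Bregman divergences $d_h(\bbu,\bbz^t)$, $d_h(\bbu,\bbz^{t+1})$, and $d_h(\bbz^{t+1},\bbz^t)$.

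Then I would invoke $L$-Bregman convexity twice. The lower (convexity) inequality applied to the pair $(\bbu,\bbz^t)$ yields $F(\bbz^t) \le F(\bbu) - \inner{\nabla F(\bbz^t)}{\bbu - \bbz^t}$, while the upper (smoothness-type) inequality applied to $(\bbz^{t+1},\bbz^t)$ yields $F(\bbz^{t+1}) \le F(\bbz^t) + \inner{\nabla F(\bbz^t)}{\bbz^{t+1} - \bbz^t} + L\,d_h(\bbz^{t+1},\bbz^t)$. Adding these and substituting the optimality bound, the choice $\Gamma = L$ makes the two $d_h(\bbz^{t+1},\bbz^t)$ terms cancel, leaving exactly the one-step inequality stated above. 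Setting $\bbu = \bbz^t$ in it shows $F(\bbz^{t+1}) - F(\bbz^t) \le -L\,d_h(\bbz^t,\bbz^{t+1}) \le 0$, so $F(\bbz^t)$ is non-increasing. Setting $\bbu = \bbz^*$ and summing over $t = 0,\dots,T-1$ telescopes the right-hand side to $L\,\bigl[\,d_h(\bbz^*,\bbz^\circ) - d_h(\bbz^*,\bbz^T)\,\bigr] \le L\,d_h(\bbz^*,\bbz^\circ)$; combining this with monotonicity, which bounds $T\bigl(F(\bbz^T) - F(\bbz^*)\bigr)$ from above by the telescoped sum, gives the claimed estimate $F(\bbz^T) - F(\bbz^*) \le L\,d_h(\bbz^*,\bbz^\circ)/T$.

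The step I expect to be most delicate is justifying the first-order optimality condition in the gradient form above rather than as a mere subdifferential inclusion. One must argue that the minimizer $\bbz^{t+1}$ lies in $\rint(C)$, so that $\nabla h(\bbz^{t+1})$ is well-defined; this relies on the fact that $\nabla h$ (the logarithmic term in the KL regularizer) blows up at the relative boundary of $C$, which forces the minimizer of $g(\cdot,\bbz^t)$ to stay strictly inside. Everything else reduces to bookkeeping with the three-point identity and the two-sided convexity bound.
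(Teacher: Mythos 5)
Your proof is correct, and it is essentially the argument the paper relies on: the paper does not prove Theorem~\ref{thm::plain::convex} itself but recaps it from the cited references (Chen--Teboulle and Birnbaum et al.), where the proof proceeds exactly as you outline --- first-order optimality of the MD step, the three-point identity, the two-sided $L$-Bregman convexity bound with $\Gamma = L$ cancelling the $d_h(\bbz^{t+1},\bbz^t)$ terms, then monotonicity plus telescoping. Your flagged technical point (that $\bbz^{t+1}$ lies in $\rint(C)$ so the optimality condition can be written with $\nabla h$) is indeed the right caveat, and it is handled in the paper's application by the assumption that the starting point $\bbb^\circ$ is strictly positive, which keeps all KL-based iterates in the relative interior.
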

\noindent In \Cref{app:PR}, we first prove Lemma~\ref{lem:one-Bregman-convex} below. Then, we show that \textsf{PR-QLIN} is an instantiation of Algorithm~\ref{alg:md} with $\Gamma=1$. This is achieved by identifying the variables $\bbb$ in \textsf{PR-QLIN} as the variables $\bbz$ in Algorithm~\ref{alg:md},
and the domain $\{\bbb ~|~\b \ge 0~\text{and}~\sum_{j=1}^m \b \le K_i\}$ as the convex set $C$ in Algorithm~\ref{alg:md}. Thus, Theorem~\ref{thm::plain::convex} guarantees the updates of \textsf{PR-QLIN} converge to an optimal solution of the convex program~\eqref{eq:shmyrev-quasilinear}, and hence Theorem~\ref{thm:pr-qlin} follows.

\begin{lemma}\label{lem:one-Bregman-convex}
The objective function $F(\bbb)$ of \eqref{eq:shmyrev-quasilinear} is a $1$-Bregman convex function w.r.t. the KL-divergence.
\end{lemma}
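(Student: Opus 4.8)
The plan is to collapse both inequalities in the definition of $1$-Bregman convexity into a single identity for the Bregman divergence of $F$, after which each bound follows from a standard fact about the KL divergence. First I would use the equality constraints of (SH) to eliminate $\bbw$ and $\bbp$: writing $p_j := \sum_i b_{ij}$ and $w_i := K_i - \sum_j b_{ij}$, the objective becomes
\[
F(\bbb) = -\sum_{i,j} b_{ij}\ln v_{ij} + \sum_i K_i - \sum_{i,j} b_{ij} + \sum_j p_j\ln p_j .
\]
Every term here is affine in $\bbb$ except $G(\bbb) := \sum_j p_j\ln p_j$, and affine terms make no contribution to a Bregman divergence. Hence $d_F(\bbb',\bbb) = d_G(\bbb',\bbb)$, so the entire lemma reduces to controlling the Bregman divergence of the single nonlinear term $G$.

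The key computation is to show that this divergence collapses to a KL divergence at the level of prices. Using $\partial G/\partial b_{ij} = \ln p_j + 1$ together with $\sum_i(b'_{ij}-b_{ij}) = p'_j - p_j$ (writing $p'_j := \sum_i b'_{ij}$), a direct expansion of $d_G$ telescopes good-by-good, the linear terms cancelling within each $j$, to give
\[
d_F(\bbb',\bbb) = d_G(\bbb',\bbb) = \sum_j\left(p'_j\ln\tfrac{p'_j}{p_j} - p'_j + p_j\right) = \KL(\bbp'\,\|\,\bbp).
\]
Because the KL divergence is nonnegative, this at once yields the left inequality $F(\bbb) + \inner{\nabla F(\bbb)}{\bbb'-\bbb} \le F(\bbb')$, i.e.\ convexity of $F$, so only the right (smoothness) inequality with $L=1$ remains.

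For that bound I must show $\KL(\bbp'\,\|\,\bbp) \le d_h(\bbb',\bbb) = \KL(\bbb'\,\|\,\bbb)$. Since $\sum_{i,j} b'_{ij} = \sum_j p'_j$ and $\sum_{i,j} b_{ij} = \sum_j p_j$, the two linear correction terms in the respective KL divergences are identical and cancel, reducing the claim to
\[
\sum_j p'_j\ln\tfrac{p'_j}{p_j} \;\le\; \sum_{i,j} b'_{ij}\ln\tfrac{b'_{ij}}{b_{ij}} .
\]
This is precisely the log-sum inequality applied to each good $j$ (numerators $b'_{ij}$, denominators $b_{ij}$, with row sums $p'_j$ and $p_j$) and then summed over $j$. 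I expect this to be the one genuinely nontrivial step: it expresses that aggregating firm spendings into a single price per good is a coarse-graining that can only decrease relative entropy. Combining the identity for $d_F$ with the log-sum inequality delivers both required bounds with constant $L=1$, which completes the proof.
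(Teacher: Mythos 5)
Your proof is correct and follows essentially the same route as the paper's: both establish the identity $F(\bbb') - F(\bbb) - \inner{\nabla F(\bbb)}{\bbb'-\bbb} = \KL(\bbp'\|\bbp)$ (your isolation of the nonlinear term $\sum_j p_j \ln p_j$ is just a cleaner bookkeeping of the same cancellation), then use nonnegativity of KL for the lower bound and the coarse-graining inequality $\KL(\bbp'\|\bbp) \le \KL(\bbb'\|\bbb)$ for the upper bound. The only difference is that the paper cites this last inequality as a known property of KL divergence (Lemma 7 of Birnbaum et al.), whereas you derive it directly from the log-sum inequality, which makes your argument self-contained.
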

\section{Gradient Ascent \& Best Response Dynamics}\label{sect:liyorke}

To establish the statement of Theorem~\ref{thm:liyorke} about the GA dynamics in equation \eqref{eq:grad_ascent} for $n=2$ (the technique is similar for any $n>2$), let $f\(x\):=x+\eta\(\frac1{4x}-\alpha\)$. To prove that Li-Yorke chaos occurs, we use a seminal theorem of \cite{Li75}, which states that if $f$ has two easy-to-verify properties, then the dynamical system is Li-Yorke chaotic.
The two properties are: (i) an invariant set of $f$ that includes a fixed point $x^*$, i.e., an interval $I=[L,U]$ such that $f\(I\)\subseteq I$ with a point $L<x^*<U$ satisfying $f(x^*) = x^*$,
and (ii) a point $x'\in I$ other than $x^*$ with period $3$, i.e., $f^{\(3\)}\(x'\)=x'$, where $\fof:=\(f\circ f\circ f\)\(x\)$. These properties are formally established in Lemma \ref{lem:invariant} and \Cref{lem:periodic_3} respectively. A visualization of Theorem~\ref{thm:liyorke} is provided in the first two panels of Figure \ref{fig:liyorke}. It can be seen that chaos may emerge even for small step-size and for asymmetric marginal costs. 

\begin{figure*}[!t]
\centering
\includegraphics[width=0.49\linewidth]{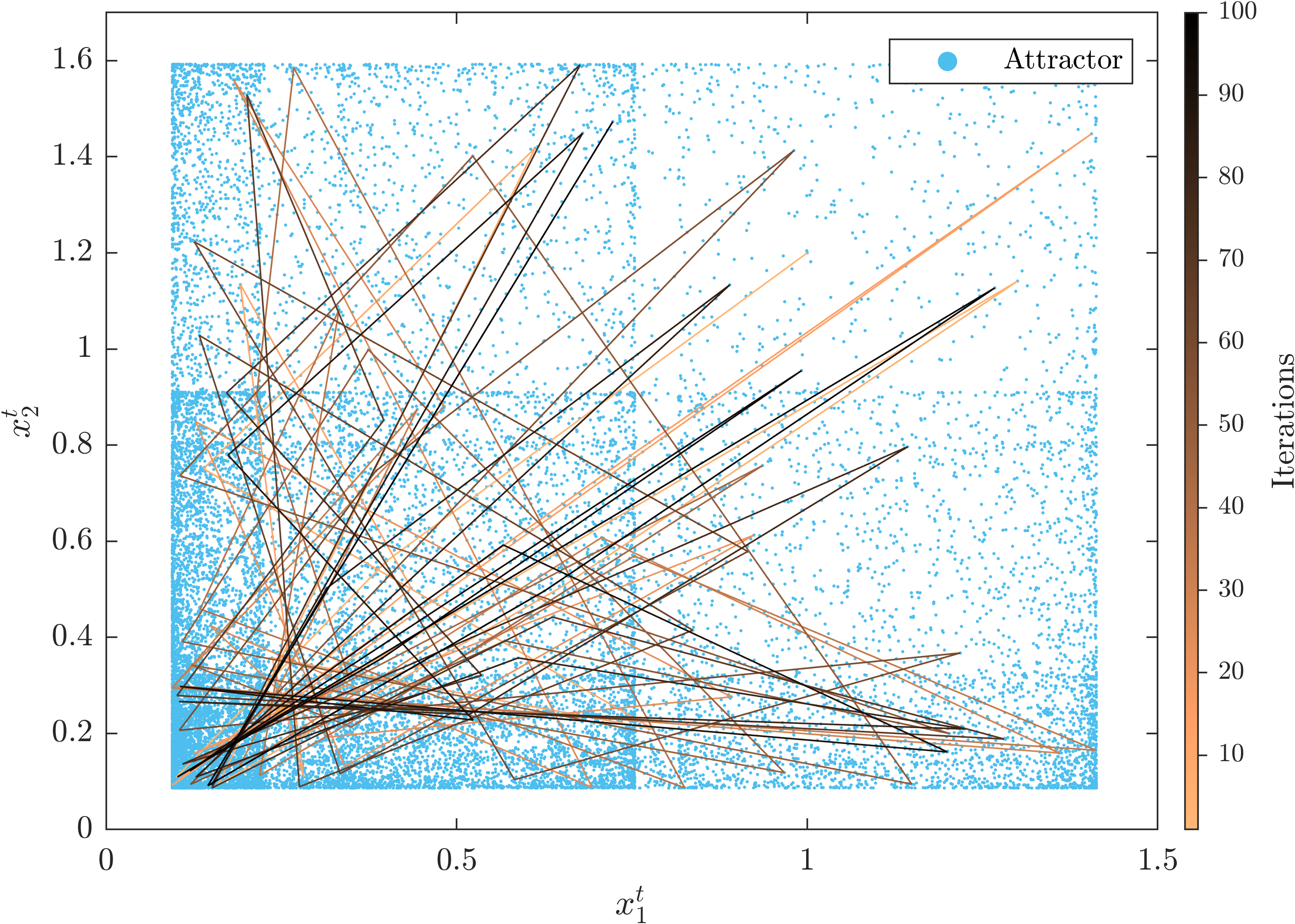}\hspace{0.2cm}
\includegraphics[width=0.49\linewidth]{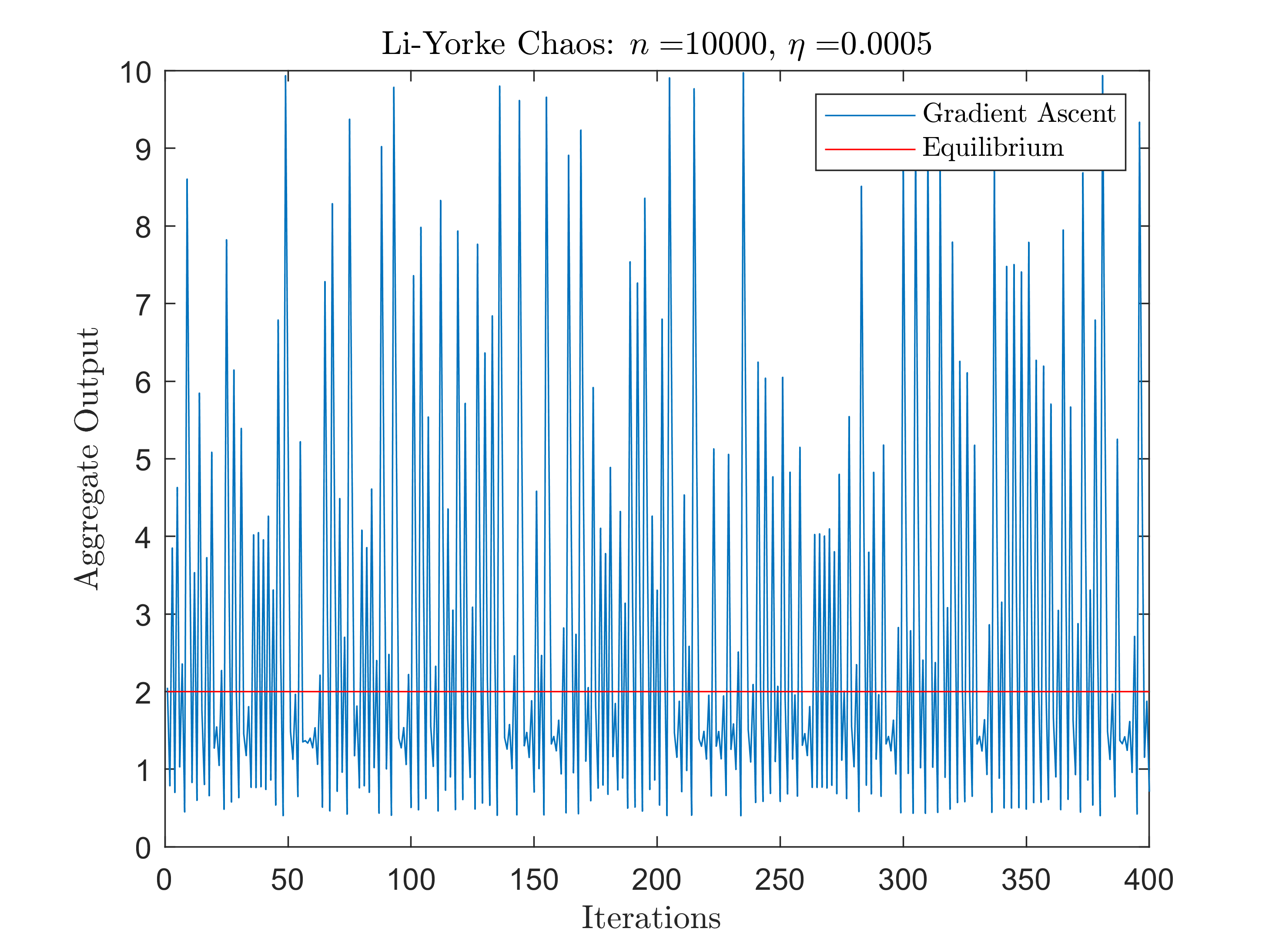}\\
\includegraphics[width=0.48\linewidth]{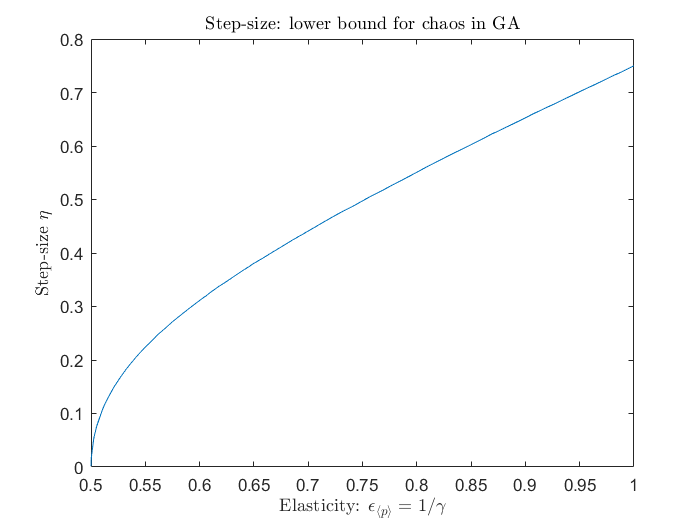}
\caption{Li-Yorke chaos of the Gradient Ascent (GA) dynamics with constant step-size in $n$-firm Cournot competition with isoelastic inverse demand function (equivalently, Tullock contest with proportional success function). First panel: chaotic trajectories (light to dark lines) of the output pairs of two firms that start from different initial outputs for $t\in [1,200]$. The planar dots denote their projections for $t\in[1,10^4]$ and fill the plane as $t$ grows. Second panel: aggregate output in a market with $n=10^4$ firms with different costs $a_i\in[10^{-5}, 1], i=1,\dots,10^4$ starting from a randomly selected initial output vector and using GA with step-size $\eta=5\cdot 10^{-4}$ for $t\in[0,400]$. Third panel: Minimum step-size for which chaos provably occurs under the GA update rule in 2-firm Cournot competition with inverse demand function $\(x_1+x_2\)^{-\gamma}$, for $\gamma>0$. Interestingly, chaotic behavior is more likely when demand is inelastic.} 
\label{fig:liyorke}
\end{figure*}

\begin{figure}[!tb]
\centering
\includegraphics[scale=0.9]{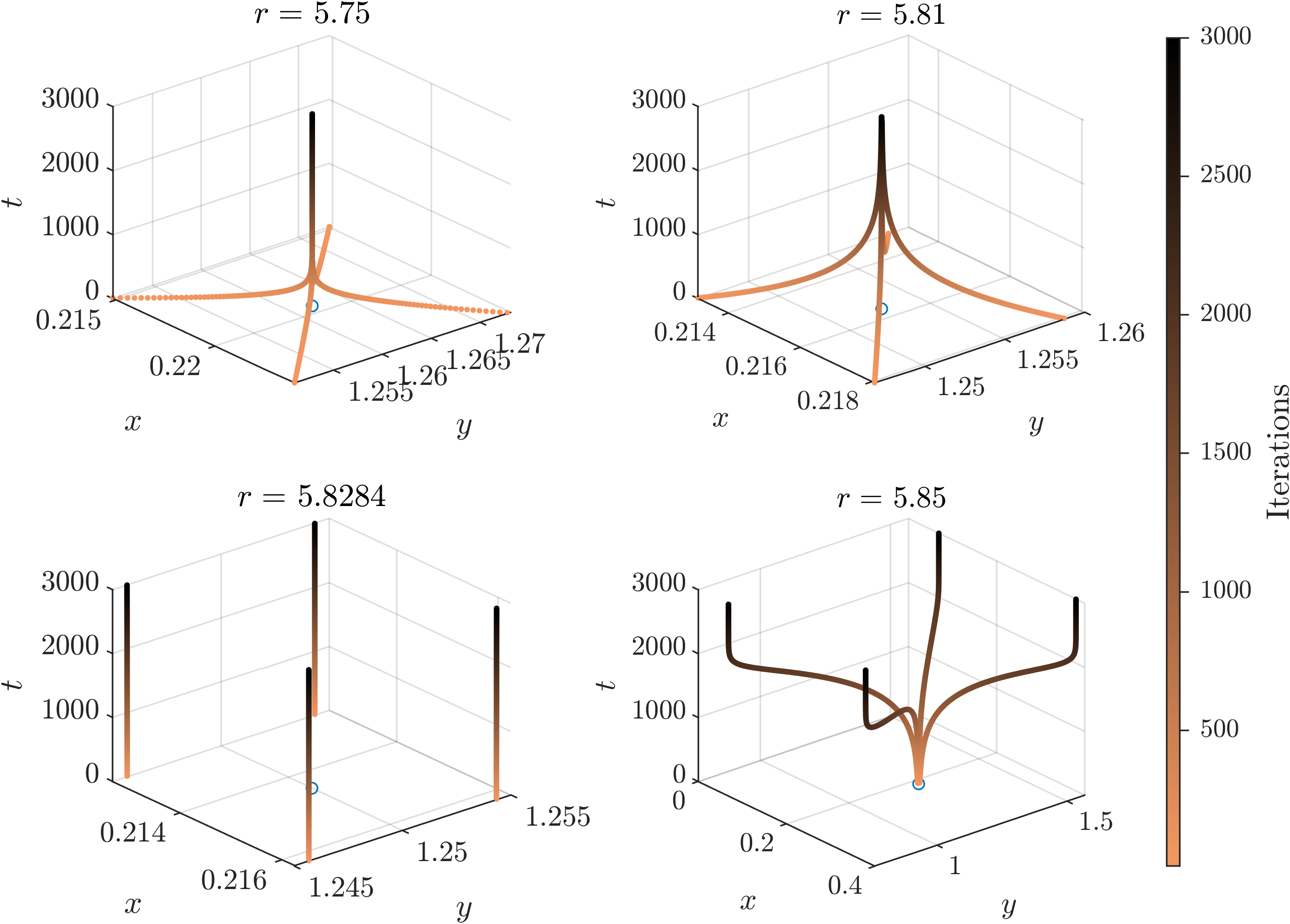}
\caption{Firms' outputs $\(x^t,y^t\)$ (horizontal plane) with respect to time $t\in[10,250]$ (vertical axis). Parameter $r$ captures the cost asymmetry between the two firms. The dynamics spiral inwards towards the equilibrium (red circle on the $\(x,y\)$-plane) for lower asymmetry between the two firms (upper panels), cycle around the equilibrium when $r$ is equal to the critical level $r_0\approx 5.8284$ (bottom left panel) and spiral outwards (till they become cyclic) for values of $r>r_0$ (bottom right panel). This is in agreement with Proposition \ref{prop:stability} in \Cref{app:liyorke} which shows that the attractors of the BR dynamics may change significantly for only small perturbations in the system parameters as expressed by the cost asymmetry-ratio $r$.}
\label{fig:spiral_1}
\end{figure}

\paragraph{General Price Functions.} 
%\stef{This is the one. Do you want to move this to the Appendix? This is were we use general demand functions. And we also save a paragraph in this way. But then, we can also move the third panel of Figure 2 which refers to this part.} 
The emergence of chaos is not tied to the particular choice of the isoelastic price function of the form $v_1 / X$,
where $X := \sum_i x_i$. More generally, we may consider the parametric price function $X^{-\gamma}$,
where $\gamma>0$ is the inverse of the demand elasticity $\varepsilon_{\langle p \rangle}$. %, i.e., $\varepsilon_{\langle p \rangle}=\gamma^{-1}$. 
This is a special case of the more general price functions $P\(X\)=A+BX^{-\gamma}$, see e.g., \cite{Lop19}, for $A=0,B=1$ and $\gamma>0$. For these functions, we verify the two conditions required in the theorem of Li and Yorke, for a range of choices of $\gamma$ numerically (via computer software).
%We use a similar argument as for the isoelastic case to establish the emergence of Li-Yorke chaos for all $\gamma>0$.
%\footnote{The argument is established by observing the existence of fixed points for $f^{\(3\)}\(x\)$ in the required interval.} 
The lower bound of the step-size $\eta$ at which chaos emerges (in the symmetric case) is depicted in the third panel of Figure \ref{fig:liyorke}.
The interesting observation is that chaos is more likely for less elastic demand.

\paragraph{Best Response Dynamics.} We conclude by revisiting the well-studied Best Response (BR) dynamics and formally establish that they can be unstable even in the simplest setting of two firms and one good. The general BR update rule is $x_i^{t+1} \la \argmax_{x_i} u_i(x_i,x_{-i}^t)$. For TC with isoelastic demand, the BR dynamics take the form $x_i^{t+1}\la\sqrt{x_{-i}^t/\alpha_i}-x_{-i}^t,$ for $i=1,2$, where $\alpha_i$ is the marginal cost of firm $i$. %and $t\in \mathbb N$. 
BR dynamics in Cournot duopoly with isoelastic functions have been (empirically) studied by \cite{Puu91} and, in the framework of contests, in \cite{War18}. Both papers suggest that the stability of the unique fixed point, $\(x_1^*,x_2^*\)$, depends on the degree of asymmetry between the two firms, captured by the ratio $r:=\alpha_1/\alpha_2$ with instabilities emerging as the asymmetry increases. While our previous technique does not lead to a formal proof of Li-Yorke chaos in  BR dynamics, we formalize the (in)-stability properties of the latter via eigenvalue analysis of a first-order linear approximation of the original non-linear system, cf. Proposition \ref{prop:stability}. The result is visualized in Figure \ref{fig:spiral_1} which shows how the trajectories of the dynamics may change dramatically in response to even small perturbations of the model parameters (firms' costs).

%\begin{figure}[!htb]
%\centering
%\includegraphics[width=\linewidth]{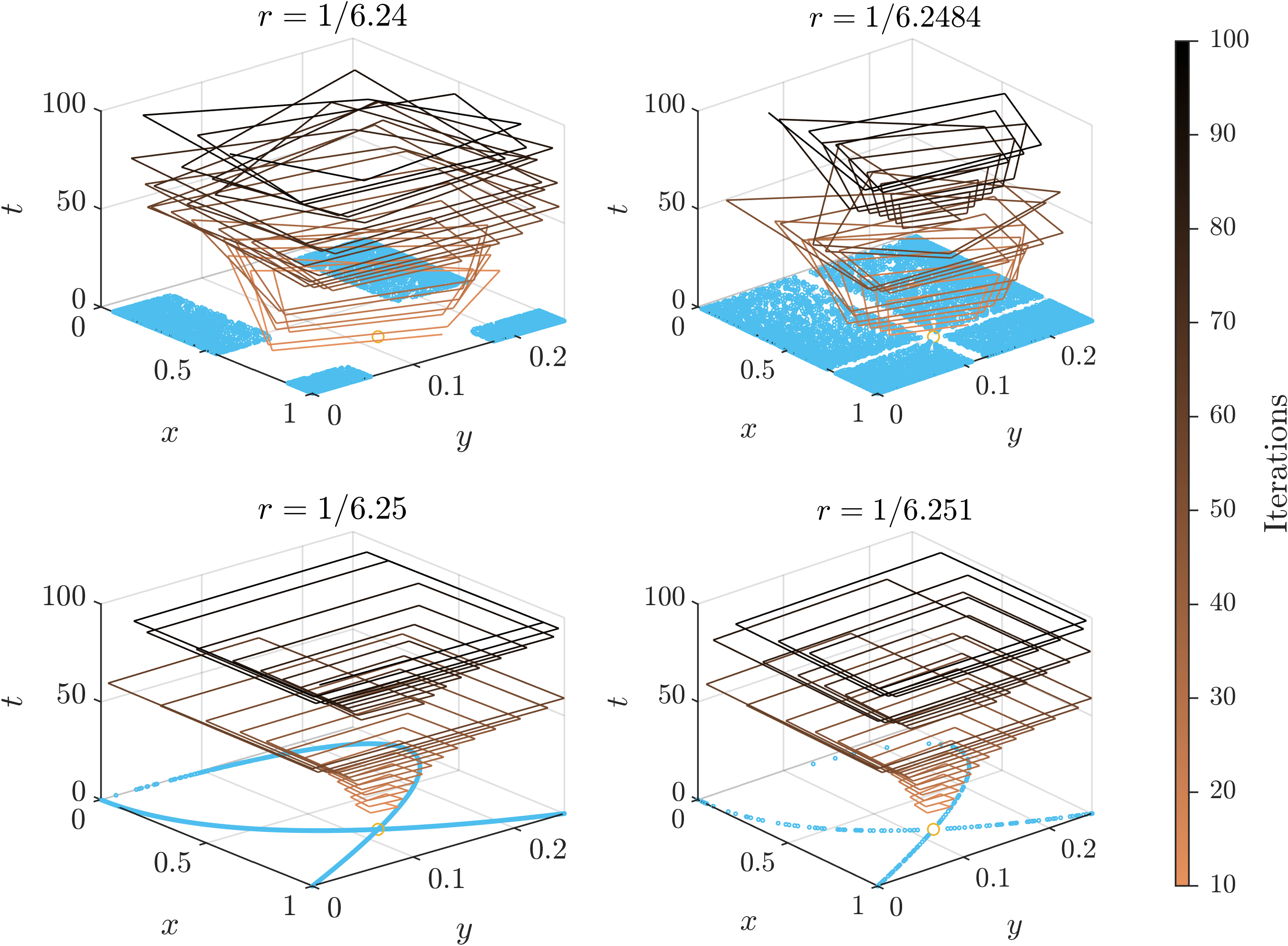}
%\caption{The trajectories (light to dark lines) of outputs $\(x^t,y^t\)$ with respect to time $t\in[10,100]$ (vertical axis). The light blue dots denote their projections on the $x-y$ plane for a longer time span $t\in [10^3,2\cdot10^3]$. While all dynamics spiral away from the unique equilibrium (red circle on the $x-y$ plane) in agreement with Proposition \ref{prop:stability}, their attractors change significantly for only small perturbations in the parameters of the game as expressed by the asymmetry-ratio $r$.}
%\label{fig:spiral_2}
%\end{figure}

\section{Conclusions}\label{sec:conclusions}
The current work brings together multi-agent learning with optimization, market theory and chaos theory. Our findings suggest that by considering production economies from a market rather than a game-theoretic perspective, we can formally derive a natural learning protocol (PR) which is stable and converges to effective outcomes rather than being chaotic (GA). Due to its simple form and mild informational requirements, PR can be used to study real-world multi-agent settings from an AI perspective. Since distributed production economies capture many important applications (blockchain, peer-to-peer networks, crowdsourcing), our contributions are significant both for theoretical and practical purposes.

%Finally, from a technical perspective, our work brings together multi-agent learning, optimization, market equilibria and chaos theory and may be, thus, of independent interest. An interesting open problem concerns further relaxations on the (already mild) informational requirements of the PR algorithm e.g., in settings with imperfect (or partially observable) local information. \par

\section*{Acknowledgements}
Yun Kuen Cheung and Stefanos Leonardos gratefully acknowledge NRF 2018 Fellowship NRF-NRFF2018-07.
Georgios Piliouras gratefully acknowledges grant PIE-SGP-AI-2020-01, NRF 2019-NRF-ANR095 AL\-IAS grant and NRF 2018 Fellowship NRF-NRFF2018-07.

\bibliographystyle{plain}
\bibliography{tullock_bib}

\newpage
\appendix
\section{Convex Programs and Equilibrium Characterizations}

In this appendix, we prove \Cref{pr:me-to-ane}. To do so, our first task is to characterise the Nash equilibria of the multi-good Cournot competition (CC) and simultaneous Tullock contests (TC) models and compare them to the Market equilibria of the ensuing Fisher Market with quasi-linear utilities. This can be conveniently done by expressing all relevant quantities via proper convex optimization problems and leveraging techniques from convex duality. These are done in \Cref{app:equilibria,app:quasi-linear-ces}. The proof of \Cref{pr:me-to-ane} is presented in \Cref{app:me-to-ane}.

\subsection{Nash Equilibria in Cournot Competition and Tullock Contests}\label{app:equilibria}

To proceed, let $\Gamma=\(N,M,\(u_i\)_{i\in N}\)$ denote a CC or TC model and recall that a Nash equilibrium $\bbb^*=\(\bbb^*_i\)_{i\in N}$ of  is a strategy profile such that the strategy $\bbb_i^*=\(b^*_{i,j}\)_{j\in M}$ of each firm $i \in N$ maximizes its own utility given the strategies $\bbb^*_{-i}$ of the other players. Setting $p_j:=\sum_{i=1}^n \b$, for each $j\in M$, the equilibrium strategy $\bbb_i^*$ can be expressed as the solution 
of the following maximization problem
\begin{alignat*}{3}
\max_{\bbb_i}\,\,&  u_i\(\bbb_i;\bbb^*_{-i}\) = \sum_{j=1}^m \b\(\frac{\v}{p_j}-1\)  \\
\text{s.t.}\,\,& \sum_{j=1}^m \b \le K_i,\tag{NE}\\
& \b  \geq 0,  \qquad \quad \forall j\in M.
\end{alignat*}
The following characterization stems from the first-order conditions.
\begin{proposition}\label{prop:foc}
%\stef{Marco please check the formulation.}
At any Nash Equilibrium, $\bbb^*=\(\bbb^*_i\)_{i\in N}$, of $\Gamma$, it holds that
\begin{enumerate}[label=(\roman*), align=left]
\item If firm $i$ exhausts its capacity, i.e., if $\sum_{j=1}^{m} b^*_{ij}=K_i$, then there exists a constant $C_i^*\ge 1$ such that 
\[\(\v/p^*_j\)\(1-b^*_{ij}/p^*_j\)\le C_i^*,\quad \forall j\in M,\]
with equality whenever $b^*_{ij}>0$ for $j\in M$. 
\item If firm $i$ does not exhaust its capacity, i.e., if $\sum_{j=1}^{m}b^*_{ij}<K_i$, then
\[\(\v/p^*_j\)\(1-b^*_{ij}/p^*_j\)\le 1,\quad \forall j \in M,\]
with equality whenever $b^*_{ij}>0$ for $j\in M$. 
\end{enumerate}
\end{proposition}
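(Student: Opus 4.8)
The plan is to obtain the two stated cases directly as the Karush--Kuhn--Tucker (KKT) conditions of the best-response problem (NE). At a Nash equilibrium the vector $\bbb^*_i$ maximizes $u_i\left(\cdot\,;\bbb^*_{-i}\right)$ over the polytope $\{\bbb_i : \b\ge 0,\ \sum_j \b \le K_i\}$; since all constraints are linear a constraint qualification holds automatically, so the KKT conditions are necessary at $\bbb^*_i$. (In fact each summand $\b \mapsto v_{ij}\,\b/(\b+\sum_{k\neq i} b^*_{kj})$ is concave, so $u_i$ is concave and the conditions are also sufficient, but only necessity is needed here.)

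The crux is the differentiation, where one must remember that the price $p_j=\sum_k b_{kj}$ contains firm $i$'s own spending $\b$. Only the $j$-th summand of $u_i=\sum_j \b\left(v_{ij}/p_j-1\right)$ depends on $\b$, and the quotient rule gives $\partial_{\b}\left(\b\, v_{ij}/p_j\right)=(v_{ij}/p_j)\left(1-\b/p_j\right)$, while the linear term contributes $-1$. Evaluating at $\bbb^*$,
\[
\frac{\partial u_i}{\partial \b}\Big|_{\bbb^*} \;=\; \frac{v_{ij}}{p^*_j}\left(1-\frac{b^*_{ij}}{p^*_j}\right)-1 .
\]
The factor $1-b^*_{ij}/p^*_j=1-y^*_{ij}$ is exactly the market-power term, and this is the one step where treating $p_j$ as a constant would give the wrong answer. (The expression presumes $p^*_j>0$, which holds at any equilibrium with positive valuations, since otherwise a firm could profitably enter good $j$.)

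It then remains to read off the multipliers. Introduce $C_i\ge 0$ for the budget constraint and $\mu_{ij}\ge 0$ for the nonnegativity constraints; stationarity of the Lagrangian gives $\partial u_i/\partial \b = C_i - \mu_{ij}$. Substituting the derivative computed above and setting $C^*_i:=1+C_i\ge 1$,
\[
\frac{v_{ij}}{p^*_j}\left(1-\frac{b^*_{ij}}{p^*_j}\right) \;=\; C^*_i-\mu_{ij} .
\]
Complementary slackness $\mu_{ij}b^*_{ij}=0$ then yields equality with $C^*_i$ whenever $b^*_{ij}>0$ and the bound $\le C^*_i$ whenever $b^*_{ij}=0$ (since $\mu_{ij}\ge 0$). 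Finally, budget complementary slackness $C_i\left(\sum_j b^*_{ij}-K_i\right)=0$ separates the cases: if firm $i$ exhausts its capacity then $C_i$ may be positive and we only know $C^*_i\ge 1$, giving case (i); if it does not, then $C_i=0$, so $C^*_i=1$ and the bound becomes case (ii). The only genuine obstacle is the differentiation through $p_j$; the rest is a routine application of KKT to a linearly constrained concave program.
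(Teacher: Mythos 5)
Your proof is correct and takes essentially the same route as the paper: both rest on the same key computation $\partial u_i/\partial b_{ij} = (v_{ij}/p_j)\left(1 - b_{ij}/p_j\right) - 1$ (differentiating through $p_j$ rather than treating it as constant), followed by reading off the two cases from first-order optimality with a budget multiplier, with equality/inequality governed by whether $b^*_{ij}>0$. The only difference is presentational — the paper derives the multiplier conditions by a manual exchange/perturbation argument while you invoke KKT with explicit multipliers and complementary slackness, which is a cleaner packaging of the same argument (and your side remark that $p^*_j>0$ at equilibrium addresses a point the paper leaves implicit).
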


\begin{proof}%[Proof of Proposition \ref{prop:foc}]
By differentiating $u_i\(\bbb\)$ with respect to $\b$, we obtain
\begin{align*}
\frac{\partial}{\partial \b}u_i\(\bbb\)&=\frac{\v}{p_j}-1+\b\(-\frac{\v}{p_j^2}\)=\frac{\v}{p_j}\(1-\frac{\b}{p_j}\)-1
\intertext{and}
\frac{\partial^2}{\partial b^2_{ij}}u_i\(\bbb\)&=\frac{2\v}{p_j}\(\frac{\b}{p_j}-1\)\le 0,
\end{align*}
where the last inequality follows from the fact that $\b\le p_j=\sum_{k\in N} b_{kj}$. This shows that the function $u_i\(\bbb_i;\bbb_{-i}\)$ is concave in $b_i$ for all $\bbb_{-i}$. Hence, the first order conditions \comment{Can I see here that the first order conditions are necessary and sufficient due to concavity? Because if I write the Proposition with \qt{if and only if}, I actually need to directions in the proof.} for the Nash equilibrium can be formulated as follows. 
\begin{description}[leftmargin=0cm, itemsep=5pt]
\item[Case 1:] $\sum_{j=1}^m b^*_{ij}=K_i$. The capacity constraint is satisfied with equality (tight), i.e., player $i$ exhausts its effort capacity in equilibrium. In this case, there exists a constant $C_i^*\ge 1$ such that 
\begin{tabbing}
\hspace*{1em}\= \hspace*{2em} \= \kill % set the tabbings
(a)\>\; if $\b>0$: $\frac{\partial}{\partial \b}u_i\(\bbb\)=C_i^*$ \\
(b)\>\; if $\b=0$: $\frac{\partial}{\partial \b}u_i\(\bbb\)\le C_i^*$.
\end{tabbing}
To see this, assume by contradiction that there exist $b^*_{ij}$ and $b^*_{ij'}$ with $\frac{\partial}{\partial \b}u_i\(\bbb^*\)> \frac{\partial}{\partial b_{ij'}}u_i\(\bbb^*\)$. Then, by using a first-order approximation, the strategy profile 
\[\bbb'_i=
\begin{cases}
b_{ik}, & \text{for }k\neq j,j'\\
\b+\delta, & \text{for }k=j\\
b_{ij'}-\delta, & \text{for }k=j'\\
\end{cases}\]
is still feasible for (NE) and yields a higher payoff to player $i$ than $\bbb^*_i$ which contradicts the fact that $\bbb_i^*$ is a best response to $\bbb^*_{-i}$. The constraint $C_i^*\ge 1$ stems from $C_i^*\ge0$ \comment{Why does $C_i^*\ge0$? } and by rescaling it with the constant $-1$ that appears in the partial derivative of $u_i$.
\item[Case 2:] $\sum_{j=1}^m b^*_{ij}<K_i$. The capacity constraint is satisfied with inequality (not tight). In this case,
\begin{tabbing}
\hspace*{1em}\= \hspace*{2em} \= \kill
(a)\>\; if $\b>0$: $\frac{\partial}{\partial \b}u_i\(\bbb\)=0$ \\
(b)\>\; if $\b=0$: $\frac{\partial}{\partial \b}u_i\(\bbb\)\le 0$.
\end{tabbing}
Since the capacity constrain is not binding, these are essentially the conditions for unconstrained maximization except for the non-negativity constrains on the $\b$'s for $i\in N,j\in M$. If not all partial derivatives are equal to zero (less or equal than zero if $\b=0$), then firm $i$ could increase its utility by increasing (decreasing) its effort $\b$ in any contest $j$ with positive (negative) partial derivative and still remain in the feasible region of (NE). \qedhere
\end{description}
\end{proof}

\subsection{Market Equilibria of Quasi-Linear Fisher Markets}\label{app:quasi-linear-ces}

We begin by stating a convex program that captures the market equilibrium prices of a quasi-linear Fisher market~\cite{Dev09,Col17}:
\begin{alignat*}{3}
\min_{p,\beta} \quad&  \sum_{j=1}^m p_j~-~\sum_{i=1}^n K_i\ln{\beta_i}\\
\text{s.t.} \quad& \v\beta_i \le p_j, ~~~\forall i\in N, j\in M,\tag{D}\label{eq:D}\\
& \beta_i \le 1,\hspace{28pt}\forall i\in N.
\end{alignat*}
To study the relationship between market equilibria (solutions of \eqref{eq:D}) and Nash equilibria (first-order conditions in Proposition \ref{prop:foc}), we consider the following equivalent problem by taking the logs of the constraints and letting $q_j:=\ln{p_j}$ and $\gamma_i:=-\ln{\beta_i}$ for all $i \in N, j\in M$.
\begin{alignat*}{3}
\min_{q,\gamma} \quad& \sum_{j=1}^m \exp(q_j)~+~\sum_{i=1}^n K_i\,\gamma_i\\
\text{s.t.} \quad& q_j+\gamma_i \ge \ln{\v},~~~\forall i\in N, j\in M,\tag{TD}\label{eq:TD}\\
&\qquad \,\gamma_i\ge 0,\hspace{28pt}\forall i\in N.
\end{alignat*}
where the constraints $\gamma_i\ge0$ correspond to the initial constraints $\beta_i\le 1$ for all $i\in N$. We can now construct the dual of \eqref{eq:TD} and gain intuition about our original problem.

\begin{lemma}\label{lem:dual}
The dual of (TD) is given by
\begin{alignat*}{3}
\max_{b,w,p} \,\,&\sum_{i=1}^n\sum_{j=1}^m\b\ln{v_{ij}}~-~\sum_{j=1}^m p_j\ln{p_j} ~-~\sum_{i=1}^n w_i\\
\text{s.t.}\,\,& \sum_{i=1}^n \b = p_j, \hspace{38.5pt}\forall j\in M,\\
& \sum_{j=1}^m \b+w_i= K_i, ~~~ \forall i\in N,\tag{SH}\label{eq:shmyrev-quasilinear}\\
& \b,w_i\ge 0,  \hspace{47pt}\forall i\in N, j\in M.
\end{alignat*}
\end{lemma}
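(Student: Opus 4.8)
The plan is to obtain (SH) by computing the Lagrangian dual of the convex program (TD). Since (TD) minimises a convex objective (a sum of exponentials plus a linear term) over affine constraints, Lagrangian duality applies directly, and the whole argument reduces to minimising the Lagrangian over the primal variables and then recognising the result as (SH).

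First I would attach a nonnegative multiplier $b_{ij}\ge 0$ to each inequality $q_j+\gamma_i\ge \ln v_{ij}$ and a nonnegative multiplier $w_i\ge 0$ to each constraint $\gamma_i\ge 0$. Writing the constraints as $\ln v_{ij}-q_j-\gamma_i\le 0$ and $-\gamma_i\le 0$, the Lagrangian of the minimisation problem is
\[
L(\bbq,\boldsymbol{\gamma},\bbb,\bbw)=\sum_{j} e^{q_j}+\sum_i K_i\gamma_i+\sum_{i,j} b_{ij}(\ln v_{ij}-q_j-\gamma_i)-\sum_i w_i\gamma_i,
\]
and the dual function is its infimum over the now-unconstrained primal variables $\bbq,\boldsymbol{\gamma}$.

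Next I would exploit the separability of $L$ and minimise variable-by-variable. The $q_j$-dependent terms are $e^{q_j}-\left(\sum_i b_{ij}\right)q_j$; setting the derivative to zero gives the stationarity condition $e^{q_j}=\sum_i b_{ij}$, which justifies the substitution $p_j:=\sum_i b_{ij}=e^{q_j}$ and contributes $p_j-p_j\ln p_j$ to the dual function (with the convention $0\ln 0=0$ handling the boundary case $p_j=0$). The $\gamma_i$-dependent terms are linear, namely $\left(K_i-\sum_j b_{ij}-w_i\right)\gamma_i$; since $\gamma_i$ ranges over all of $\mathbb{R}$ in the inner infimum, this term drives the dual function to $-\infty$ unless its coefficient vanishes, which is exactly the equality constraint $\sum_j b_{ij}+w_i=K_i$ appearing in (SH).

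Finally, on the feasible region the dual function equals $\sum_{i,j} b_{ij}\ln v_{ij}+\sum_j p_j-\sum_j p_j\ln p_j$, together with $p_j=\sum_i b_{ij}$, $\sum_j b_{ij}+w_i=K_i$, and $b_{ij},w_i\ge 0$. The only remaining manipulation is to dispose of the stray $\sum_j p_j$ term: summing the budget constraint over $i$ gives $\sum_j p_j=\sum_{i,j} b_{ij}=\sum_i K_i-\sum_i w_i$, so substituting and discarding the additive constant $\sum_i K_i$ (which does not change the maximiser) turns the dual function into precisely the (SH) objective $\sum_{i,j} b_{ij}\ln v_{ij}-\sum_j p_j\ln p_j-\sum_i w_i$. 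I expect this bookkeeping—observing that the extra $\sum_j p_j$ collapses, via the budget constraints, into $-\sum_i w_i$ plus a constant—to be the only non-mechanical step; the rest is a routine dual computation. If one additionally wants the optimal values to coincide (used later to equate market equilibria with solutions of (SH)), strong duality follows from Slater's condition, since (TD) is convex and strictly feasible (take each $q_j$ and $\gamma_i$ large).
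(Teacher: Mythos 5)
Your proposal is correct and follows essentially the same route as the paper's proof: the same Lagrangian with multipliers $b_{ij}$ and $w_i$, elimination of $q_j$ via the stationarity condition $e^{q_j}=\sum_i b_{ij}$ (the paper phrases this step as computing the convex conjugate of $\sum_j e^{q_j}$), the vanishing-coefficient argument for $\gamma_i$ yielding the budget constraints, and the identical final substitution of $\sum_j p_j$ by $\sum_i K_i - \sum_i w_i$ with the constant discarded. The only cosmetic differences are that you spell out the $\gamma_i$-coefficient argument and add the Slater's-condition remark, both of which the paper leaves implicit.
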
 
Variable $\b$ can be physically interpreted as the spending of player $i\in N$ on good $j\in M$. Accordingly, variable $w_i$, corresponds to the unspent budget of firm $i\in N$.
While variable $w_i$ can be eliminated via the second constraint in the convex program,
it will be useful to retain it in describing the solution of \eqref{eq:shmyrev-quasilinear} (see Proposition~\ref{prop:foc2}). 

\begin{proof}%[Proof of Lemma \ref{lem:dual}]
 Let $\L\(\(q,\gamma\),\(b,p,w\)\)$ denote the Lagrange function of (TD), where $\(q,\gamma\)$ denote the primal variables and $\(b,p,w\)$ the dual variables (to be properly defined). Then, we have that
\begin{align*}
\L\(\(q,\gamma\),\(b,p,w\)\)&=\sum_{j=1}^m \exp{q_j}+\sum_{i=1}^n K_i\gamma_i+\sum_{i=1}^n\sum_{j=1}^m\b\(\ln{v_{ij}}-\gamma_i-q_j\)+\sum_{i=1}^nw_i\(-\gamma_i\)\\
&=\(-p\)^Tq+\sum_{j=1}^m\exp{q_j}+\sum_{i=1}^n\sum_{j=1}^m\b\ln{\v}+\sum_{i=1}^n(K_i-w_i-\sum_{j=1}^m\x)\gamma_i,
\end{align*}
where $p_j=\sum_{i=1}^n\b$ for all $j\in M$ and $\b,w_i\ge0$ for all $i\in N,j\in M$. To proceed with the non-linear part involving the variables $q=\(q_j\)_{j\in M}$, let $f\(q\):=\sum_{j=1}^mf_j\(q_j\)=\sum_{j=1}^m\exp{q_j}$. Minimization of $\L$ with the respect to $q$ yields 
\begin{align*}
\min_{q}\left\{\(-p\)^Tq+\sum_{j=1}^m\exp{q_j}\right\}=-\max_{q}\left\{p^Tq-\sum_{j=1}^m\exp{q_j}\right\}=-f^*\(p\),
\end{align*}
where $f^*\(q\)$ denotes the \emph{convex conjugate} of $f$, see e.g., \cite{Dev09,Boy04}. Using the separability of $f\(q\)$ in $q_j, j\in M$, we have that $f^*\(p\)=\sum_{j=1}^n f^*_j\(q_j\)$. To determine $f^*_j\(p_j\)$, observe that $\nabla f_j\(q_j\)=\exp{q_j}$ and hence that $p_j:=\exp{q_j}$, i.e., $q_j=\ln{p_j}$. This implies that 
\[f^*_j\(p_j\)=p_j\ln{p_j}-f_j\(\ln{p_j}\)=p_j\ln{p_j}-p_j.\]
Putting everything together, we obtain the convex program
\begin{alignat*}{3}
\max &~~~\sum_{i=1}^n\sum_{j=1}^m \b\ln{v_{ij}}~-&&\sum_{j=1}^m p_j\ln{p_j} +\sum_{j=1}^m p_j\\
\text{s.t.}&~~~ \sum_{i=1}^n \b = p_j,&& \quad \forall j\in M,\\
&~~~\sum_{j=1}^m \b +w_i= K_i,&& \quad \forall i\in N,\\
&~~~\b,w_i\ge 0,&&\quad \forall i\in N, j\in M.
\end{alignat*}
By summing up the first set of constraints with respect to $j\in M$ and the second set of constraints with respect to $i\in N$, we obtain that 
\[\sum_{j=1}^m p_j=\sum_{j=1}^m\sum_{i=1}^n\b =\sum_{i=1}^n K_i -\sum_{i=1}^n w_i.\]
Since $\sum_{i=1}^n K_i$ is a constant (and can, thus, be omitted from the objective function), we substitute $\sum_{j=1}^m p_j$ with $\sum_{i=1}^n w_i$ in the objective function of the (SH) to obtain the formulation in the statement of Lemma \ref{lem:dual}. This concludes the proof.
\end{proof}

Next, we derive the first-order conditions of \eqref{eq:shmyrev-quasilinear}. 

\begin{proposition}\label{prop:foc2}
At any $\bbb^\#=\(\bbb^\#_i\)_{i\in N}$ market equilibrium spending of $\Gamma$, it holds that
\begin{enumerate}[label=(\roman*),align=left]
\item If firm $i$ exhausts its capacity, i.e., if $\sum_{j=1}^{m}\b^\#=K_i$ or equivalently if $w^\#_i=0$, then there exists a constant $\tilde C_i\ge 1$ such that 
\[\v/p^\#_j\le \tilde C_i,\quad \forall j\in M\]
with equality holds whenever $\b^\#>0$ for $j\in M$. 
\item If firm $i$ does not exhaust its capacity, i.e., if $\sum_{j=1}^{m}\b^\#<K_i$ or equivalently if $w_i>0$, then
\[\v/p^\#_j\le 1,\quad \forall j \in M,\]
with equality whenever $\b^\#>0$ for $j\in M$. 
\end{enumerate}
\end{proposition}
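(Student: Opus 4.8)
The plan is to obtain the two sets of inequalities as the Karush--Kuhn--Tucker (KKT) conditions of the convex program \eqref{eq:shmyrev-quasilinear}, whose maximisers are exactly the market equilibrium spendings. First I would record that this is a concave maximisation over a polytope: eliminating the price variables through $p_j=\sum_i b_{ij}$, the term $-\sum_j p_j\ln p_j$ becomes a concave function of $\bbb$ (a concave map precomposed with an affine one) while all remaining terms are linear, and the feasible set is cut out by affine (in)equalities. Hence the linear constraint qualification holds automatically, and the KKT conditions are both necessary and sufficient to characterise $\bbb^\#$. A small preliminary step is to note that every equilibrium price is strictly positive, $p^\#_j>0$: since valuations are positive, the marginal objective gain $\partial F/\partial b_{ij}=\ln(v_{ij}/p_j)-1$ tends to $+\infty$ as $p_j\to 0^+$, so no good can carry zero total spending at an optimum, and in particular the logarithmic terms are differentiable there.

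Next I would form the Lagrangian, attaching a multiplier $\mu_i$ to each budget constraint $\sum_j b_{ij}+w_i=K_i$ and nonnegative multipliers $\lambda_{ij},\nu_i$ to the constraints $b_{ij}\ge 0$ and $w_i\ge 0$. Differentiating (with $p_j$ eliminated as above) yields the stationarity conditions
\[
\ln(v_{ij}/p^\#_j)-1-\mu_i+\lambda_{ij}=0,
\qquad
-1-\mu_i+\nu_i=0,
\]
together with complementary slackness $\lambda_{ij}\,b^\#_{ij}=0$ and $\nu_i\,w^\#_i=0$. Setting $\tilde C_i:=\exp(1+\mu_i)$, the first equation rearranges to $v_{ij}/p^\#_j=\tilde C_i\exp(-\lambda_{ij})$, so that $v_{ij}/p^\#_j\le \tilde C_i$ for every $j$, with equality exactly when $\lambda_{ij}=0$, i.e.\ (by complementary slackness) whenever $b^\#_{ij}>0$. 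This already delivers the common inequality appearing in both cases of the Proposition.

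It then remains to pin down $\tilde C_i$ using the second stationarity condition $\mu_i=\nu_i-1$ and the slackness on $w_i$, recalling that exhausting capacity is equivalent to $w^\#_i=0$ since $w^\#_i=K_i-\sum_j b^\#_{ij}\ge 0$. If firm $i$ exhausts its capacity then $w^\#_i=0$, and $\nu_i\ge 0$ forces $\mu_i\ge -1$, whence $\tilde C_i=\exp(1+\mu_i)\ge 1$, which is part (i). If firm $i$ does not exhaust its capacity then $w^\#_i>0$, so complementary slackness gives $\nu_i=0$, hence $\mu_i=-1$ and $\tilde C_i=1$, which is part (ii).

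I expect the only genuine subtlety to be the positivity of prices (and the consequent differentiability of $p_j\ln p_j$ at the optimum); once that is handled the remainder is a routine KKT computation. It is instructive to compare the resulting condition $v_{ij}/p^\#_j\le \tilde C_i$ with its Nash counterpart in Proposition~\ref{prop:foc}: the extra factor $1-b_{ij}/p_j$ present there encodes each firm internalising its own effect on the price, and its disappearance here is precisely the price-taking behaviour built into the Eisenberg--Gale-type objective of \eqref{eq:shmyrev-quasilinear}.
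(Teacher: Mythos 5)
Your KKT derivation is correct, and computationally it coincides with what the paper does: both differentiate the objective of \eqref{eq:shmyrev-quasilinear} after eliminating $\bbp$ and $\bbw$, obtain $\ln(v_{ij}/p_j)-1$ for the goods and $-1$ for the slack variable $w_i$, and split into the two cases according to whether $w^\#_i$ vanishes. Where the paper argues informally by an exchange argument (shifting spending between two goods, or between a good and the unspent budget, ``by the same reasoning as in \Cref{prop:foc}''), you introduce explicit multipliers; this is tidier, and your definition $\tilde C_i=\exp(1+\mu_i)$ actually repairs a genuine sloppiness in the paper's proof, which in its Case 1 sets the \emph{derivative} $\ln(v_{ij}/p_j)-1$ equal to a constant ``$\tilde C_i\ge 1$'', conflating the logarithmic scale of the first-order conditions with the ratio scale $v_{ij}/p^\#_j$ in which the proposition is stated. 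Your preliminary observation that $p^\#_j>0$ (hence that $p_j\ln p_j$ is differentiable at the optimum) is likewise a point the paper passes over in silence.

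The one thing you must fix is the logical starting point. You take as given that the maximisers of \eqref{eq:shmyrev-quasilinear} are exactly the market equilibrium spendings; but in the paper this equivalence is a \emph{conclusion} of the proof you are writing, not an available premise. The paper first derives the proposition's conditions directly from the definition of market equilibrium --- each firm's $\bbb^\#_i$ maximises the price-taking utility $\sum_j b_{ij}(v_{ij}/p^\#_j-1)$ over its budget set with $p^\#_j$ held fixed --- and then checks that the first-order conditions of \eqref{eq:shmyrev-quasilinear} match; that matching is precisely what establishes that the program captures ME spendings. As written, your proof is therefore circular relative to the paper's own development. The repair is cheap: either cite the equivalence from the literature the paper builds on (e.g.\ \cite{Col17}), or close the loop yourself --- your stationarity conditions, once exponentiated, say exactly that firm $i$ spends only on goods maximising $v_{ij}/p^\#_j$ and leaves money unspent only when that maximum is at most $1$, which are precisely the optimality conditions of the firm's price-taking problem at prices $p_j=\sum_i b_{ij}$; observing this yields both the proposition (directly from the ME definition) and the equivalence you wanted to assume.
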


\begin{proof}%[Proof of Proposition \ref{prop:foc2}]
We will show that the solution to (SH) yields a market equilibrium, by showing that the first-order conditions for the objective function of (SH)
are the same as the first order conditions for each firm's individual utility $u_i\(\bbb\)$ as given in the objective function of (NE).
From the market equilibrium perspective, the contribution $\b$ of each firm $i\in N$ is negligible when compared to the aggregate effort $p_j=\sum_{i\in N}\b$ contributed to contest $j\in M$ by all firms $i\in N$ and hence $p_j$ can be thought as constant and be independent of $\b$. This implies that differentiating $u_i\(\bbb\)$ with respect to $\b$ yields
\[\frac{\partial}{\partial \b}u_i\(\bbb\)=\frac{\v}{p_j}-1.\]
By the same reasoning as in Proposition \ref{prop:foc}, we obtain the first-order conditions in the statement of Proposition \ref{prop:foc2}.

It remains to show that the solution yields the same first order conditions and hence that its solution corresponds to a market equilibrium. To proceed, let $\phi\(\bbb,\bbw\)$ with $\(\bbb,\bbw\)=\(\(\b\)_{i\in N,j\in M},\(w_i\)_{i\in N}\)$, denote the objective function of (SH), i.e.,
\begin{align*}
\phi\(b,w\):=\,&\sum_{i=1}^n\sum_{j=1}^m \b\ln{v_{ij}}-\sum_{j=1}^m p_j\ln{p_j}-\sum_{i=1}^n w_i.
\end{align*} 
Taking the partial derivative of $\phi\(\bbb,\bbw\)$ with respect to $\b$ yields
\[\frac{\partial}{\partial \b}\phi\(\bbb,\bbw\)=\ln{\v}-\ln{p_j}-1=\ln{\frac{\v}{p_j}}-1,\]
where the fraction $\v/p_j$ denotes the marginal utility and
\[\frac{\partial^2}{\partial \b^2}\phi\(\bbb,\bbw\)=-\frac1{p_j}<0.\]
The shows that $\phi\(\bbb,\bbw\)$ is concave in $\bbb_i$ for all $\bbb_{-i}$. Hence, the first-order conditions can be now formulated as follows. 
\begin{description}[leftmargin=0cm]
\item[Case 1:] $w^\#_i=0$, or equivalently $\sum_{j=1}^m \b^\#=K_i$. In this case, at $\bbb^\#$, there exists a constant $\tilde C_i\ge 1$ such that 
\begin{tabbing}
\hspace*{1em}\= \hspace*{2em} \= \kill
(a)\>\; if $\b>0$: $\frac{\partial}{\partial \b}\phi\(\bbb,\bbw\)=\tilde C_i$\\
(b)\>\; if $\b=0$: $\frac{\partial}{\partial \b}\phi\(\bbb,\bbw\)\le \tilde C_i$.
\end{tabbing}
The reasoning is the same as in the proof of Proposition \ref{prop:foc}.
\item[Case 2:] $w^\#_i>0$, or equivalently $\sum_{j=1}^m \b^\#<K_i$. Since $\frac{\partial}{\partial w_i} \phi\(\bbb,\bbw\)=-1$, in this case, we have that
\begin{tabbing}
\hspace*{1em}\= \hspace*{2em} \= \kill
(a)\>\; if $\b>0$: $\frac{\partial}{\partial \b}\phi\(\bbb,\bbw\)=-1$ \\
(b)\>\; if $\b=0$: $\frac{\partial}{\partial \b}\phi\(\bbb,\bbw\)\le -1$.
\end{tabbing}
To see this, rewrite $\frac{\partial}{\partial w_i} \phi\(\bbb,\bbw\)=\ln{\frac11}-1$ and observe that $\bbw$ can be seen as an additional good $m+1$ with constant marginal utility $v_{i,m+1}/p_{m+1}=1$. Hence, under the assumption that $w^\#_i>0$, if there exists a good $j\in M$ with $\b^\#>0$
and $\frac{\partial}{\partial \b} \phi\(\bbb^\#,\bbw^\#\)> -1$ (resp.~$\frac{\partial}{\partial \b} \phi\(\bbb^\#,\bbw^\#\)< -1$),
then firm $i$ would be better off to reduce (resp.~increase) $w^\#_i$ and increase (resp.~decrease) $\b^\#$ by the same amount. This implies that
\[\frac{\partial}{\partial \b} \phi\(\bbb^\#,\bbw^\#\)=-1, \forall i\in N, \forall j\in M \text{ with } \b^\#>0,\]
which is equivalent to the statement in (a). Statement (b) follows by the same reasoning with the additional assumption that now $\tb_{ij}$ is on the boundary of the feasible region. \qedhere
\end{description}
\end{proof}

\subsection{Proof of \Cref{pr:me-to-ane}: Market Equilibria are Approximate Nash Equilibria}\label{app:me-to-ane}

Using the Nash and market equilibrium characterizations from \Cref{prop:foc,prop:foc2}, we can now prove \Cref{pr:me-to-ane}.

\paragraph{Step 1.} Note that $v_{ij} / p^\#_j - 1$ is the marginal utility return per unit of spending on good $j$
(the $-1$ occurs because we need to subtract the budget spent in quasi-linear utility).
At a market equilibrium $\bbp^\#$, for each firm $i$ with quasi-linear utility, it only produces goods $j$ that maximizes $v_{ij} / p^\#_j - 1$,
and its attained utility is thus $\left(\max_j \left\{ \frac{v_{ij}}{p^\#_j} \right\} - 1\right) K_i = (\beta_i-1) K_i$.

\paragraph{Step 2.} Recall that $p^\#_j = \sum_k b^\#_{kj}$. Let $i$ be any firm. Since $\b^\# \le K_i$, $\sum_{k\neq i} b^\#_{kj} = p^\#_j - \b^\# \ge p^\#_j - K_i \ge (1-\delta) p^\#_j$.

\paragraph{Step 3.} 
Now, consider the situation when all firms except $i$ have fixed their spendings at $\bbb^\#_{-i}$.
For any spending $\bbb_i$ of firm $i$, the component about good $j$ in $u_i^{\mathsf{CC}}$ is $v_{ij} \cdot \frac{\b}{\b + \sum_{k\neq i} b^\#_{kj}} - \b$.
Thus,
\[
\frac{\partial u_i^{\mathsf{CC}}}{\partial \b} = v_{ij} \cdot \frac{\sum_{k\neq i} b^\#_{kj}}{(\b + \sum_{k\neq i} b^\#_{kj})^2} - 1 \le \frac{v_{ij}}{\sum_{k\neq i} b^\#_{kj}} - 1.
\]
Due to Step 2, we have $\frac{\partial u_i^{\mathsf{CC}}}{\partial \b} \le \frac{v_{ij}}{(1-\delta) p^\#_j} - 1$ for any feasible $\bbb_i$ that satisfies $\sum_j \b \le K_i$.

\paragraph{Step 4.} 
Due to Step 3, when all firms except $i$ have fixed their spendings at $\bbb^\#_{-i}$, no matter how firm $i$ spends its budget, its maximum possible utility is either zero (when its spends nothing),
or $\max_j \left\{ \frac{v_{ij}}{(1-\delta) p^\#_j} - 1 \right\}\cdot K_i = \left(\frac{\beta_i}{1-\delta} - 1\right)K_i$.

\paragraph{Step 5.}
If $\beta_i > 1$, then by comparing the results of Step 1 and Step 4, firm $i$ can improve its utility by a ratio of at most $\frac{\frac{\beta_i}{1-\delta}-1}{\beta-1}$.

If $\beta_i < 1$ and $\frac{\beta_i}{1-\delta} \le 1$, then firm $i$ spends nothing at both market equilibrium and in the situation depicted by Steps 3 and 4, so its utility cannot be improved.

However, if $\beta_i < 1$ but $\frac{\beta_i}{1-\delta} > 1$, then firm $i$ spends nothing at market equilibrium, thus attaining utility zero, but it will spend some money in the situation depicted by Steps 3 and 4, yielding a positive utility. This forces $\bbb^\#$ not be a $\delta'$-NE for any finite $\delta'$.

Summarizing the three cases discussed in Step 5, Proposition~\ref{pr:me-to-ane} follows.
\section{Missing Proofs in Section~\ref{sect:PR}}\label{app:PR}

The high-level structure of the proof of \Cref{thm:pr-qlin} was already given in \Cref{sect:PR}. In \Cref{app:quasi-linear-ces}, we showed that the convex program \eqref{eq:shmyrev-quasilinear}
captures the market equilibrium spending of a quasi-linear Fisher market.
Thus, there are two remaining components of the proof which we will complete here: proving \Cref{lem:one-Bregman-convex} and deriving
the \textsf{PR-QLIN} Algorithm from the Mirror Descent procotol in \Cref{alg:md}.

\subsection{Proof of \Cref{lem:one-Bregman-convex}}

Recall that the function $F$ with domain $(\bbb,\bbw,\bbp)$ is
\begin{align*}
F(\bbb,\bbw,\bbp):=-\sum_{i=1}^n\sum_{j=1}^m\b\ln{v_{ij}} +\sum_{i=1}^n w_i +\sum_{j=1}^m p_j\ln{p_j}.
\end{align*}
As we have discussed in \Cref{sect:PR}, we can eliminate $\bbp,\bbw$ since they are functions of $\bbb$,
via the equalities $\sum_i \b = p_j$ for each $j$ and $K_i - \sum_j \b = w_i$ for each $i$. In the proof below, we will keep the notation $\bbw,\bbp$ as doing so will help to ease the clustering of algebra (however, we keep in mind that $\bbp,\bbw$ are now functions of $\bbb$ rather than independent variables).
%Let $\bbz:=\(\bbb,\bbw\), \bbz:=\(\bbb',\bbw'\)$. 
To prove \Cref{lem:one-Bregman-convex}, it suffices to show
%Let $\bbz:=\(\bbb,\bbw,\bbp\), \bbz:=\(\bbb',\bbw',\bbp'\)$. We need to show that 
\begin{equation}\label{eq:ast}
0\le F(\bbb') - F(\bbb) - \inner{\nabla F(\bbb)}{\bbb'-\bbb}\le \KL\(\bbb' \| \bbb \).
\end{equation}
Note that
\[
\frac{\partial F}{\partial \b} ~=~ \ln p_j - \ln \v.
\]
To proceed, let for convenience $\Phi(\bbb,\bbb'):= F(\bbb') - F(\bbb) - \inner{\nabla F(\bbb)}{\bbb'-\bbb}$. Then we have
\begin{align*}
\Phi(\bbb,\bbb') &=-\sum_{i=1}^n \sum_{j=1}^m \(\b' - \b\)\ln{\v}+ \sum_{j=1}^m \( p_j'\ln p_j' - p_j \ln p_j \) + \sum_{i=1}^n (w'_i - w_i) \\
&\phantom{=\,\,} -\sum_{i=1}^n \sum_{j=1}^m (\b' - \b)(\ln p_j - \ln{\v})\\
&=\sum_{j=1}^m \( p_j'\ln p_j' - p_j \ln p_j \)+ \sum_{i=1}^n \( (K_i - \sum_{j=1}^m \b') - (K_i - \sum_{j=1}^m \b) \)\\
&\phantom{=\,\,}-\sum_{j=1}^m (\ln p_j) \sum_{i=1}^n (\b' - \b).
\end{align*}
Recall that $\sum_i \b = p_j$, $\sum_i \b' = p'_j$. Thus,
\begin{align*}
\Phi(\bbb,\bbb') &= \sum_{j=1}^m \( p_j'\ln p_j' - p_j \ln p_j \) + \sum_{j=1}^m (p_j - p'_j)-\sum_{j=1}^m (\ln p_j)(p'_j - p_j)\\
&= \sum_{j=1}^m p'_j \ln \frac{p'_j}{p_j} + \sum_{j=1}^m (p_j - p'_j)= \KL(\bbp'\|\bbp).
\end{align*}
Since $\KL$ is always non-negative, this shows the first inequality in equation \eqref{eq:ast}.
The second inequality in equation \eqref{eq:ast} is also straightforward since $\bbb$ is a \emph{refinement} of $\bbp$
and $\bbb'$ is a refinement of $\bbp'$, so due to a well-known property of KL divergence (see the proof of Lemma 7 in \cite{Bir11}),
\[
\KL\(\bbp'\| \bbp\) ~\le~ \KL\(\bbb'\| \bbb\),
\]
%\begin{align*}
%\KL\(\bbp'\| \bbp\)&\le\KL\(\bbb'\| \bbb\)+\KL\(\bbw'\| \bbw\)+\KL\(\bbp'\| \bbp\)\\&=\KL\(\bbz'\| \bbz\),
%\end{align*}
which concludes the proof. 

\subsection{Deriving the \textsf{PR-QLIN} Algorithm from Mirror Descent}

\newcommand{\of}{\overline{F}}

The Mirror Descent update rule for the objective function $F$ in \eqref{eq:shmyrev-quasilinear} is (cf. \Cref{alg:md})
\begin{align*}
(\bbb^{t+1},\bbw^{t+1}) = \, & \argmin_{(\bbb,\bbw)\in C} \left\{\sum_{i=1}^n (w_i - w_i^t)+\sum_{i=1}^n \sum_{j=1}^m \(\b - \b^t\)\(1 - \ln{\frac{\v}{p_j^t}}\)+\sum_{i=1}^n\KL(\bbb_i \| \bbb_i^t) \right\}.
\end{align*}
Since $\sum_{j=1}^m \b + w_i$ is a constant in the domain $C$, we may ignore any term that does not depend on $\bbb$ and $\bbw$, and any positive constant factor in the objective function and simplify the above update rule to
\begin{align*}
&(\bbb^{t+1},\bbw^{t+1}) =\argmin_{(\bbb,\bbw)\in C} \left\{ -\sum_{i=1}^n \sum_{j=1}^m \b\left(\ln{\frac{\v}{p_j^t}} - \ln \frac{\b}{\b^t}+ 1\right)\right\} =: \argmin_{(\bbb,\bbw)\in C}\of(\bbb,\bbw).
\end{align*}
We have that $\frac{\partial}{\partial w_i} \of(\bbb,\bbw)=0$ and 
\begin{align*}
\frac{\partial}{\partial \b} \of(\bbb,\bbw)= \ln {\frac{\b}{\b^t}} - \ln {\frac{\v}{p_j^t}}\,.
\end{align*}
As before, for each fixed $i$, the values of $\ln \frac{\b}{\b^t}-\ln \frac{\v }{p_j^t}$ for all $j$ are identical. In other words, there exists $c_i > 0$ such that
\[\b = c_i \cdot \frac{\v\b^t}{p_j^t}\, .\]
There are two cases which depend on $S_i^t:= \sum_{j=1}^m \frac{\v\b^t}{p_j^t}$ as follows
\begin{itemize}[leftmargin=*]
\item If $S^t_i \ge K_i$, then for each $j$ we set $\b^{t+1} = K_i \cdot \frac{\v \b^t}{p_j^t} / S^t_i$, and $w_i^{t+1} = 0$.
At this point, we have $\frac{\partial}{\partial \b} \of(\bbb,\bbw) = \ln \frac{K_i}{S^t_i} \le 0=\frac{\partial}{\partial w_i} \of(\bbb,\bbw)$, so the optimality condition is satisfied.
\item if $S^t_i < K_i$, then for each $j$, we set $\b^{t+1} = \frac{\v \b^t}{p_j^t}$,
and $w^{t+1} = K_i - \sum_{j=1}^m \b^{t+1} > 0$.
At this point, $\frac{\partial}{\partial \b} \of(\bbb,\bbw) = 0 =\frac{\partial}{\partial w_i} \of(\bbb,\bbw)$, so the optimality condition is again satisfied.
\end{itemize}
\section{Missing Proofs in Section~\ref{sect:liyorke}}\label{app:liyorke}

\begin{lemma}\label{lem:invariant}
For any $\alpha\ge1$, let $f\(x\)=x+\eta\(\frac{1}{4x}-\alpha\)$. Then, for any $\eta \in \lt 3/4\alpha^2,1/\alpha^2\)$, the interval $I\(\eta\):=\lt L\(\eta\), U\(\eta\)\rt$, with
\begin{align*}
L\(\eta\)&=\sqrt{\eta}\(1-\alpha\sqrt{\eta}\), \\
U\(\eta\)&=\frac{\sqrt{\eta}}{4\(1-\alpha\sqrt{\eta}\)}\cdot\(5-12\alpha\sqrt{\eta}+8\alpha^2\eta\)
\end{align*}
is invariant under $f$, i.e., $f\(I\(\eta\)\)\subseteq  I\(\eta\)$, and $x^*=1/4\alpha \in I\(\eta\)$ for all $\eta$.
\end{lemma}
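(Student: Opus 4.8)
The plan is to exploit the simple unimodal shape of $f$ on $(0,\infty)$ and reduce the whole statement to one scalar inequality. First I would compute $f'(x) = 1 - \eta/(4x^2)$, which vanishes only at $x = \sqrt{\eta}/2$, is negative to its left and positive to its right; hence $f$ is strictly decreasing on $(0,\sqrt{\eta}/2)$ and strictly increasing on $(\sqrt{\eta}/2,\infty)$, so it attains a global minimum over $(0,\infty)$ at $x=\sqrt{\eta}/2$. A direct evaluation gives $f(\sqrt{\eta}/2) = \sqrt{\eta}-\alpha\eta = \sqrt{\eta}(1-\alpha\sqrt{\eta}) = L(\eta)$, so the left endpoint of $I(\eta)$ is precisely the minimum value of $f$. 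Since $\eta < 1/\alpha^2$ forces $\alpha\sqrt{\eta} < 1$, we get $L(\eta) > 0$, so $I(\eta)\subseteq(0,\infty)$ and $f$ is well defined and positive there.

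The second, and algebraically central, observation is that $U(\eta)$ is exactly the image of the left endpoint. Substituting $L(\eta)=\sqrt{\eta}-\alpha\eta$ into $f$ and clearing the denominator yields $f(L(\eta)) = \frac{\sqrt{\eta}}{4(1-\alpha\sqrt{\eta})}\bigl(5 - 12\alpha\sqrt{\eta} + 8\alpha^2\eta\bigr) = U(\eta)$. With these two facts the invariance $f(I)\subseteq I$ splits cleanly. For the lower bound, $f(x)\ge L(\eta)$ holds for every $x>0$ because $L$ is the global minimum, so in particular on $I(\eta)$. For the upper bound, unimodality guarantees that $\max_{x\in[L,U]} f(x)$ is attained at an endpoint; since $f(L)=U$, it remains only to verify $f(U)\le U$. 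Cancelling $\eta>0$, the latter is equivalent to $1/(4U)\le \alpha$, i.e.\ to $U\ge 1/(4\alpha)=x^*$.

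I would establish $U\ge x^*$ via the substitution $t:=\alpha\sqrt{\eta}$, under which the inequality becomes $t(5-12t+8t^2)\ge 1-t$ (using $1-t>0$), and the difference of the two sides factors as the perfect cube $t(5-12t+8t^2)-(1-t) = 8t^3-12t^2+6t-1 = (2t-1)^3$. The range $\eta\in[3/(4\alpha^2),1/\alpha^2)$ translates to $t\in[\sqrt{3}/2,1)$, so $t>1/2$ and the cube is nonnegative; this factorisation is the step I expect to be the main obstacle, since it is not obvious a priori and is exactly what certifies the admissible range of $\eta$. The same inequality $U\ge x^*$ supplies the right half of the claim $x^*\in I(\eta)$, while the left half $L\le x^*$ comes for free from $x^*=f(x^*)\ge \min_{x>0} f = L$. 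Combining the lower- and upper-bound arguments gives $f(I(\eta))\subseteq I(\eta)$ together with $x^*\in I(\eta)$, completing the proof.
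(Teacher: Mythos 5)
Your proof is correct, and it shares the paper's overall construction: $\sqrt{\eta}/2$ is the unique minimizer of $f$, $L(\eta)$ is the minimum value, $U(\eta)=f(L(\eta))$, and invariance follows from unimodality. Where you differ is in how the two key inequalities are certified, and your route is tighter on both counts. For the upper bound, you reduce $f(U)\le U$ to $U\ge x^*=1/(4\alpha)$ via the sign of $f(x)-x$ and certify it with the exact factorization $t(5-12t+8t^2)-(1-t)=(2t-1)^3$, where $t=\alpha\sqrt{\eta}\in[\sqrt{3}/2,1)$; the paper instead proves $U>\sqrt{\eta}/2$ (which, combined with its separate observation that $\sqrt{\eta}/2>x^*$ on this range, implies the same thing) by locating the roots of the quadratic $8\alpha^2\eta-10\alpha\sqrt{\eta}+3$. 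For the lower bound, your one-line argument $x^*=f(x^*)\ge\min_{x>0}f=L$ replaces the paper's monotonicity-in-$\eta$ analysis of $L(\eta)$ evaluated at the endpoint $\eta=3/(4\alpha^2)$. Your endpoint-maximum argument also removes the need to locate $\sqrt{\eta}/2$ inside $[L,U]$ before splitting the interval into monotone pieces, as the paper does. What the paper's extra bookkeeping buys is the strict chain $L<x^*<\sqrt{\eta}/2<U$: strictness of $L<x^*<U$ is what the Li--Yorke machinery in \Cref{sect:liyorke} actually invokes, and the position of $\sqrt{\eta}/2$ inside $I(\eta)$ is reused in \Cref{lem:periodic_3}. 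Your argument yields $U>x^*$ strictly (the cube is strictly positive since $t\ge\sqrt{3}/2>1/2$) but only $L\le x^*$; strictness is, however, immediate from your own facts --- equality would force $x^*$ to be the unique minimizer $\sqrt{\eta}/2$, i.e.\ $t=1/2$, contradicting $t\ge\sqrt{3}/2$. A pleasant by-product of your factorization is that it exhibits the exact threshold $t\ge 1/2$, i.e.\ $\eta\ge 1/(4\alpha^2)$, at which the invariance argument goes through, confirming the paper's remark that the stated range of $\eta$ is not minimal.
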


\begin{proof}[Proof of Lemma \ref{lem:invariant}.]
The first and second derivatives of $f\(x\)$ are $f'\(x\)=1-\frac{\eta}{4x^2}$ with $f'\(x\)=0$, iff $x=\sqrt{\eta}/2$ and $f''\(x\)=\eta/2x^3>0$ for any $x>0$. Hence, $f$ is concave and attains its global minimum at $x=\sqrt{\eta}/2$, with 
\[f\(\sqrt{\eta}/2\)=\sqrt{\eta}\(1-\alpha\sqrt{\eta}\).\]
This is precisely the lower bound $L\(\eta\)$ of the interval $I\(\eta\)$. Observe that since $\eta \in[3/4\alpha^2,1/\alpha^2)$ by assumption, we also have that 
\[\frac{\sqrt{\eta}}2\ge \frac{\sqrt{3}/2\alpha}{2}=\sqrt{3}\cdot\frac{1}{4\alpha}>\frac1{4\alpha}\,,\]
which implies that the fixed point $x^*=1/4\alpha$ is less than the argmin $x=\sqrt{\eta}/2$ of $f$. By taking the derivative with respect to $\eta$, $L\(\eta\)$ is decreasing for any $\eta\in[3/4\alpha^2,1/\alpha^2)$, and hence, we have that
\begin{equation}\label{eq:lower}
L\(\eta\)\le L\(3/4\alpha^2\)=\frac{\sqrt{3}}{2\alpha}\(1-\alpha\cdot\frac{\sqrt{3}}{2\alpha}\)<\frac{1}{4\alpha}\,,
\end{equation}
for all $\eta$ in this range. By the observation above, this also implies that $L\(\eta\)<\sqrt{\eta}/2$. To obtain the upper bound $U\(\eta\)$, we need to evaluate $f$ at $f\(\sqrt{\eta}/2\)$, since $f$ is (steeply) decreasing for small values. Hence, after some algebraic manipulations, we get
\begin{align*}
f\(\sqrt{\eta}\(1-\alpha\sqrt{\eta}\)\)&=\sqrt{\eta}\(1-\alpha\sqrt{\eta}\)+\eta\(\frac{1}{4\sqrt{\eta}\(1-\alpha\sqrt{\eta}\)}-\alpha\)=\frac{\sqrt{\eta}\cdot\(5-12\alpha\sqrt{\eta}+8\alpha^2\eta\)}{4\(1-\alpha\sqrt{\eta}\)}\,.
\end{align*}
Let $U\(\eta\):=f\(\sqrt{\eta}\(1-\alpha\sqrt{\eta}\)\)$. Next, we observe that $U\(\eta\)>\sqrt{\eta}/2$ for any $\eta \in[3/4\alpha^2,1/\alpha^2)$, which -- by the observation above, that $1/4\alpha<\sqrt{\eta}/2$ in this range -- also implies that $1/4\alpha<U\(\eta\)$. To show that 
\begin{equation}\label{eq:upper}
U\(\eta\)=\frac{\sqrt{\eta}}{4\(1-\alpha\sqrt{\eta}\)}\cdot\(5-12\alpha\sqrt{\eta}+8\alpha^2\eta\)>\frac{\sqrt{\eta}}{2}\,,
\end{equation}
we note that $1-\alpha\sqrt{\eta}>0$ for all $\eta$ in this range, and hence, after some manipulations, this is equivalent to 
\[8\alpha^2\eta-10\alpha\sqrt{\eta}+3>0.\]
The roots of the expression on the left side of the inequality are $\sqrt{\eta}=\frac{3}{4\alpha}$ and $\sqrt{\eta}=\frac1{2\alpha}$. Hence, for $\eta>\frac{9}{16\alpha^2}$, in particular for $\eta\in[3/4\alpha^2,1/\alpha^2)$, the above inequality holds and hence, we have that $U\(\eta\)>\sqrt{\eta}/2$. Summing up, equations \eqref{eq:lower} and \eqref{eq:upper} establish that
\[L\(\eta\)<\frac{1}{4\alpha}<\frac{\sqrt{\eta}}{2}<U\(\eta\)\,,\]
for all $\eta\in[3/4\alpha^2,1/\alpha^2)$. It remains to show that $f\(I\(\eta\)\)\subseteq I\(\eta\)$. This indeed follows by the construction of $I\(\eta\)$. Specifically, $x \in [L\(\eta\),\sqrt{\eta}/2]$ implies that 
\[f\(x\)\in[f\(\sqrt{\eta}/2\),f\(L\(\eta\)\)]=[L\(\eta\),U\(\eta\)]\,,\]
since $f$ is (strictly) decreasing in $[L\(\eta\),\sqrt{\eta}/2]$ (cf. $f'$ above). Similarly, $x \in [\sqrt{\eta}/2,U\(\eta\)]$ implies that
\[f\(x\)\in[f\(\sqrt{\eta}/2\),f\(U\(\eta\)\)]=[L\(\eta\),f\(U\(\eta\)\)]\,,\]
since $f$ is (strictly) increasing in $[\sqrt{\eta}/2,U\(\eta\)]$. Now, since $U\(\eta\)>1/4\alpha$, where $1/4\alpha$ is the unique fixed point of $f$ and since $1/4\alpha$ lies at the part at which $f$ is decreasing, it follows that
\[U\(\eta\)>f\(U\(\eta\)\)\,,\]
and hence that $[L\(\eta\),f\(U\(\eta\)\)]\subset [L\(\eta\),U\(\eta\)]=I\(\eta\)$ which completes the proof.
\end{proof}

\begin{figure}[!htb]
\centering
\includegraphics[width=0.5\linewidth]{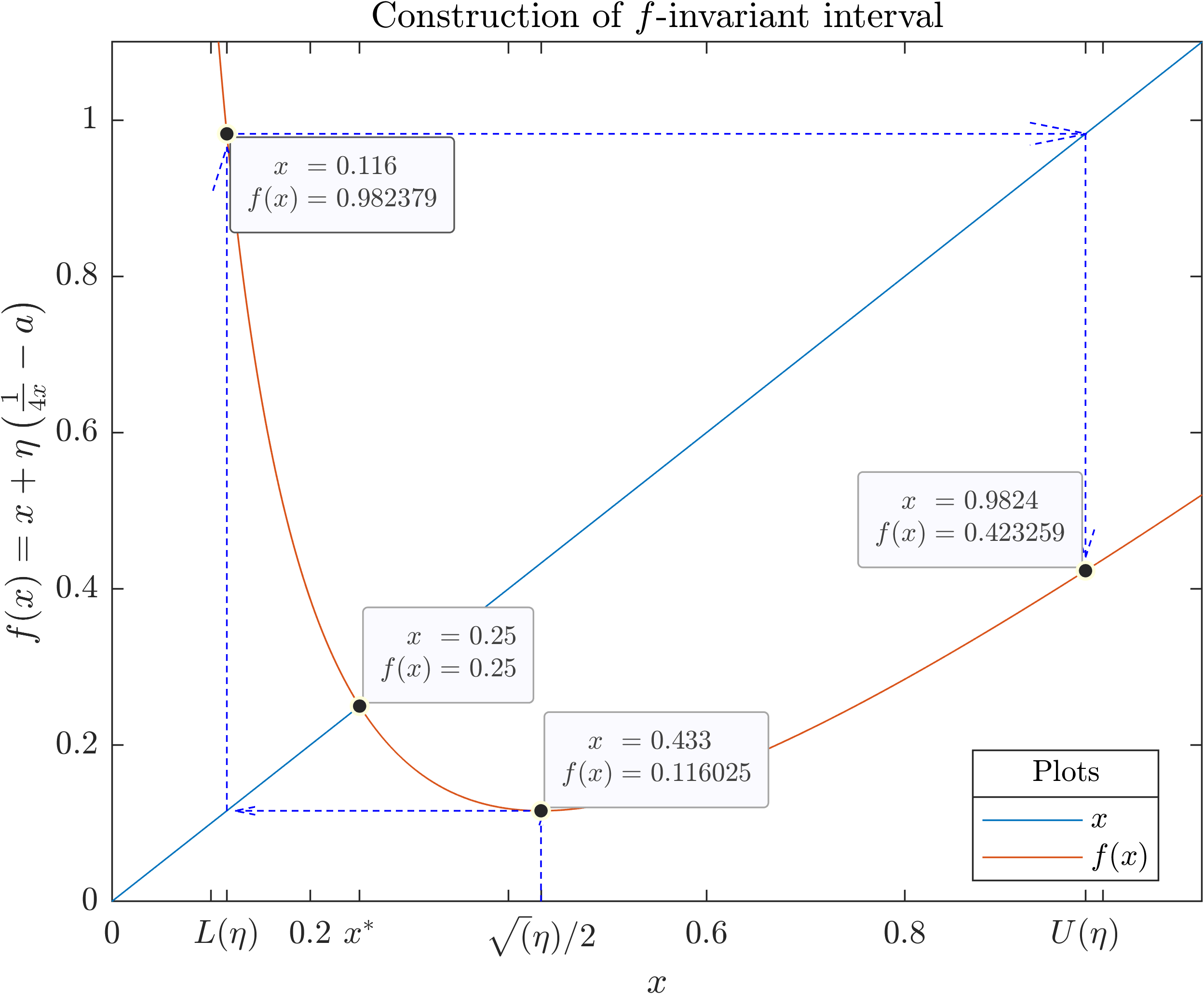}
\caption{Sketch of the construction of the invariant interval of $f$ for $\alpha=1$ and $\eta=3/4$ in the proof of Lemma \ref{lem:invariant}. Starting from the argmin of $f$ at $x=\sqrt{\eta}/2$ and following the blue dashed arrows, we get the interval $I\(\eta\)=[L\(\eta\),U\(\eta\)]$ with $f\(I\(\eta\)\)\subseteq I\(\eta\)$ and $x^*=1/4\alpha\in I\(\eta\)$.}
\label{fig:invariant}
\end{figure}

\begin{remark}\label{rem:invariant}
The selected interval, $\eta\in[3/4\alpha^2,1/\alpha^2)$, in the statement of Lemma \ref{lem:invariant} is not minimal, in the sense that Li-Yorke chaos also appears for values of $\eta$ outside these bounds. However, it is sufficient for our purposes. A sketch of the constructive proof of Lemma \ref{lem:invariant} is given in Figure \ref{fig:invariant}. In the depicted instantiation, $\alpha=1$ and $\eta=3/4$, yet the image is qualitatively the same for all values of $\eta$ in the above range. 
\end{remark}

Next, we turn to the existence of a fixed point $p\neq x^*\in I\(\eta\)$ of $\fof:=\(f\circ f\circ f\)\(x\)$. The result is formally established in \Cref{lem:periodic_3}, which concludes the two-step proof of Li-Yorke chaos.

\begin{proposition}\label{lem:periodic_3}
Let $\alpha\ge1$ arbitrary and let $\fof:=\(f\circ f\circ f\)\(x\)$. Then, for any $\eta \in \lt 3/4\alpha^2,1/\alpha^2\)$, $\fof$ is continuous for $x>0$ and  
\begin{align*}
f^{\(3\)}\(\sqrt{\eta}/2\)-\sqrt{\eta}/2 \ge 0 > f^{\(3\)}\(U\(\eta\)\)-U\(\eta\),
\end{align*}
where $U\(\eta\)$ is defined in Lemma \ref{lem:invariant}. In particular, $\fof$ has an additional fixed point $p\neq x^*=1/4\alpha$ with $p\in \lt \sqrt{\eta}/2, U\(\eta\)\rt$. 
\end{proposition}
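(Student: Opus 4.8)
The plan is to produce a genuine period-$3$ point of $f$ by applying the Intermediate Value Theorem to $g(x) := f^{(3)}(x) - x$ on the interval $[\sqrt{\eta}/2, U(\eta)]$, exploiting the orbit structure already exposed in Lemma~\ref{lem:invariant}. For continuity I would first note that $f$ is continuous on $(0,\infty)$ and, since its global minimum value equals $L(\eta) = \sqrt{\eta}(1-\alpha\sqrt{\eta}) > 0$ (positive because $\eta < 1/\alpha^2$), $f$ maps $(0,\infty)$ into $[L(\eta),\infty) \subset (0,\infty)$. Hence the composite never reaches the singularity at $0$, so $f^{(3)} = f\circ f\circ f$ is continuous on all of $(0,\infty)$, and in particular on $[\sqrt{\eta}/2, U(\eta)]$.

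To cut down the algebra I would first normalize. The rescaling $x \mapsto \alpha x$, $\eta \mapsto \alpha^2\eta$ conjugates $f$ to the same map with $\alpha = 1$ and step size $\alpha^2\eta \in [3/4, 1)$, carrying $\sqrt{\eta}/2$, $U(\eta)$ and $x^* = 1/(4\alpha)$ to their $\alpha=1$ counterparts; so I assume $\alpha = 1$ and set $u := \sqrt{\eta} \in [\sqrt{3}/2, 1)$, turning every inequality below into a single-variable statement in $u$. The decisive simplification for the left endpoint is that $\sqrt{\eta}/2$ is the argmin of $f$, so by the very construction in Lemma~\ref{lem:invariant} we have $f(\sqrt{\eta}/2) = L(\eta)$ and $f(L(\eta)) = U(\eta)$; consequently $f^{(3)}(\sqrt{\eta}/2) = f(U(\eta))$. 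Thus the first inequality collapses to the single claim $f(U(\eta)) \ge \sqrt{\eta}/2$, which after substituting $u$ and clearing denominators becomes a polynomial inequality in $u$ to be verified on $[\sqrt{3}/2,1)$.

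For the right endpoint I would iterate $f$ three times starting from $U(\eta)$, simplifying with the identities of Lemma~\ref{lem:invariant} at each stage, to express $f^{(3)}(U(\eta))$ and check $f^{(3)}(U(\eta)) < U(\eta)$, again as a polynomial inequality in $u$. The IVT applied to the continuous $g$ then yields a point $p \in [\sqrt{\eta}/2, U(\eta)]$ with $f^{(3)}(p) = p$. Finally, since $f$ has the unique positive fixed point $x^* = 1/(4\alpha)$ (the only root of $\eta(1/(4x)-\alpha)=0$) and Lemma~\ref{lem:invariant} gives $x^* < \sqrt{\eta}/2 \le p$, the point $p$ is not fixed by $f$; as $3$ is prime, its least period divides $3$ but is not $1$, hence equals exactly $3$.

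The main obstacle is the explicit sign analysis of $f(U(\eta))$ and, especially, of the triple iterate $f^{(3)}(U(\eta))$: nesting the reciprocal nonlinearity three deep produces heavy rational expressions, and the resulting polynomial inequalities in $u$ must be shown to hold uniformly over $[\sqrt{3}/2,1)$. I expect the left-endpoint inequality to be nearly tight at the bottom of the range, so that bound in particular must be treated with a sharp estimate rather than a crude one; getting a clean, verifiable reduction to a low-degree polynomial in $u$ is the step where the real work lies.
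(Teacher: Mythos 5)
Your overall strategy coincides with the paper's: continuity of $f^{(3)}$ away from the singularity, the orbit identities $f(\sqrt{\eta}/2)=L(\eta)$ and $f(L(\eta))=U(\eta)$ from Lemma~\ref{lem:invariant}, a rescaling to a one-parameter family of scalar inequalities (your $u=\sqrt{\eta}$ with $\alpha=1$ is exactly the paper's substitution $\eta=c/4\alpha^2$, $c\in[3,4)$), and the Intermediate Value Theorem on $[\sqrt{\eta}/2,U(\eta)]$ followed by the observation that the resulting fixed point of $f^{(3)}$ lies strictly above $x^*$ and hence cannot be a fixed point of $f$. Your continuity argument (the image of $f$ is contained in $[L(\eta),\infty)$, which is bounded away from $0$ since $\eta<1/\alpha^2$) is in fact cleaner and more explicit than the paper's, and your identity $f^{(3)}(\sqrt{\eta}/2)=f(U(\eta))$ is the same device the paper uses when it writes $U(\eta)=f^{(2)}(\sqrt{\eta}/2)$.

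The genuine problem is the final verification step: the left-endpoint inequality is not merely ``nearly tight'' at the bottom of the range --- it is false there, so the uniform verification over $u\in[\sqrt{3}/2,1)$ that your plan calls for cannot succeed. Take $\alpha=1$, $\eta=3/4$, the left endpoint of the admissible range. Then $\sqrt{\eta}/2\approx 0.4330$, $L(\eta)\approx 0.1160$, $U(\eta)=f(L(\eta))\approx 0.9820$, and $f^{(3)}(\sqrt{\eta}/2)=f(U(\eta))\approx 0.4230<\sqrt{\eta}/2$, violating the displayed inequality. The paper's own computation confirms this: $f^{(3)}(\sqrt{\eta}/2)-\sqrt{\eta}/2$, viewed as a function of $c=4\alpha^2\eta$, is increasing and crosses zero at $c\approx 3.00918$, which is why the paper proves the stated inequalities only for $c\in[3+\delta,4)$ with $\delta\approx 0.00918$ and explicitly omits the case $c\in[3,3+\delta)$, remarking that a more involved argument (with different test points) is needed there and that the exact interval does not affect the overall chaos result. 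If you carry out your plan as written, you will establish the proposition only on this restricted range, and the attempted polynomial verification near $u=\sqrt{3}/2$ will fail. Separately, note that your sketch defers exactly the content that constitutes the paper's proof: the sign analysis of $f(U(\eta))-\sqrt{\eta}/2$ and of $f^{(3)}(U(\eta))-U(\eta)$ (equivalently $f^{(5)}(\sqrt{\eta}/2)-f^{(2)}(\sqrt{\eta}/2)$) is carried out there by computer algebra together with a monotonicity-in-$c$ claim, not by reduction to a low-degree polynomial, and nothing in your outline guarantees that such a clean reduction exists.
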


\begin{proof}%[Proof of \Cref{lem:periodic_3}]
Since $\eta\in \lt 3/4\alpha^2,1/\alpha^2\)$, it will be convenient to parametrize $\eta$ as
\[
\eta:=\frac{c}{4\alpha^2},\vspace*{0.1in}
\]
with $c\in[3,4)$. We present here the proof for $c\in[3+\delta,4)$ for $\delta\approx 0.00918$ which is more intuitive and omit the case $c\in[3,3+\delta)$. The argument is essentially the same for values of $\eta \in [3/4\alpha^2,(3+\delta)/4\alpha^2]$. However, the numerical evaluation of $\fof$ is more involved and is thus omitted. In any case, the exact interval for which existence of an additional fixed point is established does not affect the overall result.\par

We first prove that $f^{\(3\)}\(\sqrt{\eta}/2\)-\sqrt{\eta}/2 \ge 0$. The proof exploits the fact that $f^{\(3\)}\(x\)$ is locally maximized at the argmin $x=\sqrt{\eta}/2$ of $f\(x\)$. Hence, (using mathematical software), we obtain that 
\begin{align*}
f^{\(3\)}\(\sqrt{\eta}/2\)-\sqrt{\eta}/2=\frac1{2\alpha}\cdot \frac{64 c^{3/2}+6 c^{5/2}-32 c^2-57 c+19 \sqrt{c}}{-4c^{3/2}+20c-34\sqrt{c}+20}\,,
\end{align*}
which can be shown to be increasing in $c$. A visual representation of $f^{\(3\)}\(\sqrt{\eta}/2\)-\sqrt{\eta}/2$ is given in the left panel of Figure \ref{fig:proof}. Since equality with zero occurs for $c\approx 3.00918$, it follows that the expression remains positive for $c\in[3.00918,4]$, or equivalently for $\eta\in [3.00918/4\alpha^2,1/\alpha^2]$.\par
Next, we prove that $f^{\(3\)}\(U\(\eta\)\)-U\(\eta\) < 0$. Recall by the proof of Lemma \ref{lem:invariant} that $U\(\eta\)=f^{\(2\)}\(\sqrt{\eta}/2\)$, where $f^{\(2\)}\(x\):=\(f\circ f\)\(x\)$.  Hence, 
\[f^{\(3\)}\(U\(\eta\)\)-U\(\eta\)=f^{\(5\)}\(\sqrt{\eta}/2\)-f^{\(2\)}\(\sqrt{\eta}/2\)\,,\]
which can be evaluated again by mathematical software. The result is 
\begin{align*}
&f^{\(3\)}\(U\(\eta\)\)-U\(\eta\)=-\frac1{2\alpha}\cdot\frac{\(\sqrt{c}-1\)^3 \sqrt{c} \(-6 c^{3/2}+26 c-38 \sqrt{c}+19\)^2 }{\left(2 c-6 \sqrt{c}+5\right) \left(-34 c^{3/2}+6 c^2+74 c-74 \sqrt{c}+29\right)}\cdot\\
&\cdot \frac{\(-1268 c^{3/2}-196 c^{5/2}+24 c^3+677 c^2+1361 c-796 \sqrt{c}+199\)}{\(-15592 c^{3/2}-7486 c^{5/2}-536 c^{7/2}+48 c^4+2638 c^3+13420 c^2+11490 c-4922 \sqrt{c}+941\)}\,.
\end{align*}
As above, the right hand side can be shown to be decreasing in $c$, cf. right panel of Figure \ref{fig:proof}. Again, equality with zero occurs $c\approx 3.00918$ and hence the expression remains negative for negative for any $c\in[3.00918,4]$, or equivalently for $\eta\in\(3.00918/4\alpha^2,1/\alpha^2\)$. The existence follows from the continuity of $\fof$ in this interval -- which can be shown by a standard exercise \comment{Do I need to show this rigorously?} for values of $\eta<1/\alpha^2$ -- and the intermediate value theorem. This proves the statement of the Proposition.
\begin{figure}[!tb]
\centering
\includegraphics[width=0.49\linewidth]{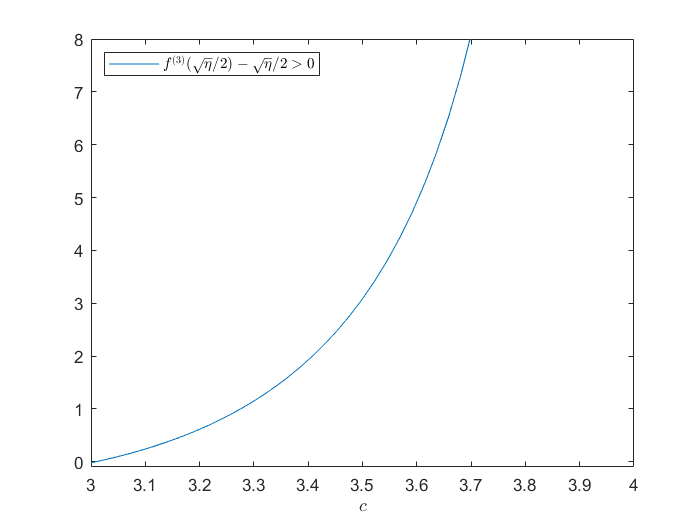}\hfill
\includegraphics[width=0.49\linewidth]{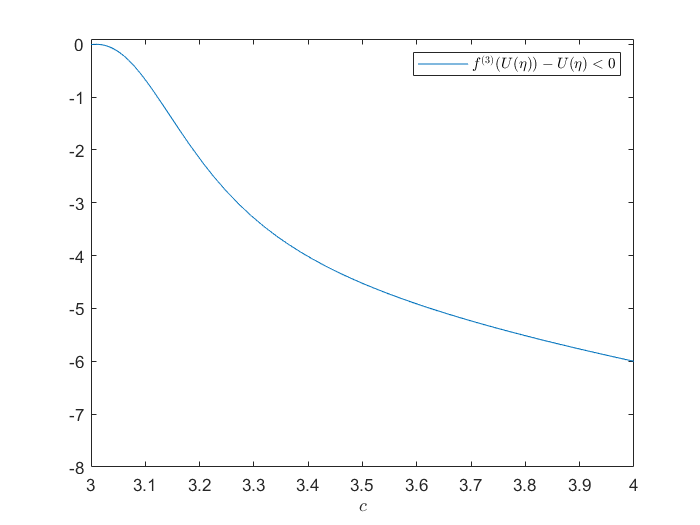}
\caption{The graphs of $f^{\(3\)}\(\sqrt{\eta}/2\)-\sqrt{\eta}/2$ (left panel) and $f^{\(3\)}\(U\(\eta\)\)-U\(\eta\)$ (right panel) for values of $c[3,4]$ or equivalently for $\eta\in \lt 3/4\alpha^2,1/\alpha^2\)$. The first expression is increasing and remains positive for any $c\ge 3.00918$ and the second expression is decreasing and remains negative for any $c\ge3.00918$.}  
\label{fig:proof}
\end{figure}
\end{proof}

The statement of \Cref{lem:periodic_3} is illustrated in Figure \ref{fig:periodic_3}.
\begin{figure}[!tb]
\centering
\includegraphics[width=0.49\linewidth]{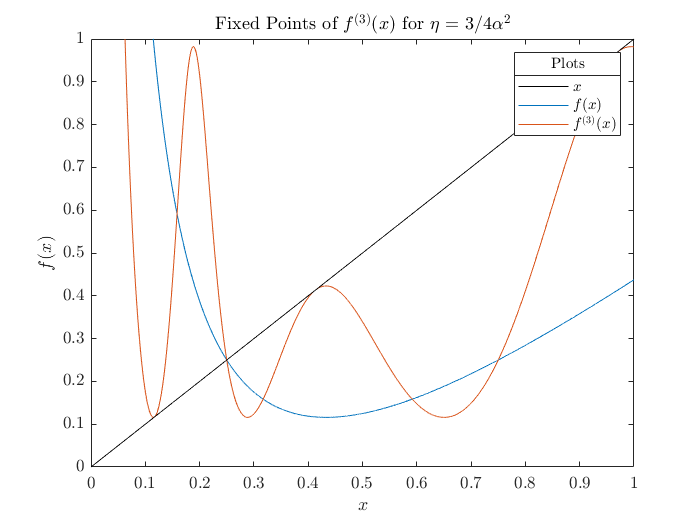}\hfill
\includegraphics[width=0.49\linewidth]{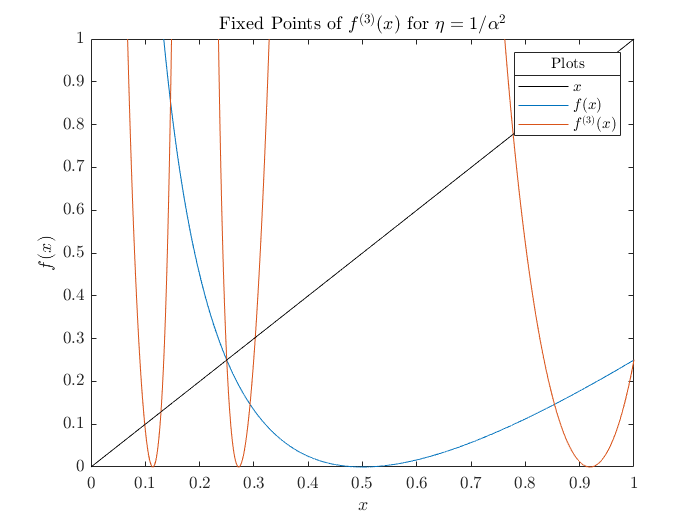}
\caption{Points of period $3$, i.e., fixed points of the function $\fof=\(f\circ f\circ f\)\(x\)$ in the interval $\(0,1\)$. The plots of $f\(x\)$ and $\fof$ have one common fixed point (common intersection point with the diagonal), however $\fof$ has more fixed points. For intermediate cases, i.e., for $3/4\alpha^2<\eta<1/\alpha^2$, the graph of $\fof$ stretches continuously from the image in the left panel to the image in the right panel. For the illustrations $\alpha=1$, but the outcome is qualitatively equivalent for any choice of $\alpha$ in the admissible range.}
\label{fig:periodic_3}
\end{figure}

\subsection{Best Response Dynamics}\label{app:br}

While the previous technique does not allow us to establish Li-Yorke chaos in the case of Best Response (BR) Dynamics, we complement the existing empirical results of \cite{Puu91} and \cite{War18} with a formal proof of the stability properties of the BR dynamics via eigenvalue analysis of a first order linear approximation of the original non-linear system. Due to the first order approximation, the boundary cases are not precise, however, qualitatively, the result is robust. Recall, that for TC with isoelastic demand, the BR dynamics take the form 
\begin{equation}\label{eq:br_dynamics}
x_i^{t+1}\la\sqrt{x_{-i}^t/\alpha_i}-x_{-i}^t, \,\, \text{for } i=1,2 \text{ and }t\in \mathbb N.
\end{equation}

\begin{proposition}\label{prop:stability}
Under the Best Response (BR) update rule, \eqref{eq:br_dynamics}, the evolution of the sequence $\(x^t,y^t\)$ of effort level pairs around the unique equilibrium depends on the value of the ratio $r:=\alpha/\beta$ of the abilities of the two firms relative to the points $r_0:=3+2\sqrt{2}$ and $r_0^{-1}$
\begin{itemize}[noitemsep]
\item If $r_0^{-1}<r<r_0$, the effort levels are spiralling inwards towards the unique equilibrium (stable spiral focus). 
\item If $r=r_0^{-1}$ or $r=r_0$, the effort levels cycle around the unique equilibrium effort level (neutral center). 
\item If $r<r_0^{-1}$ or $r>r_0$, the effort levels are spiralling outwards away from the unique equilibrium effort level (unstable spiral focus).\vspace*{0.08in}
\end{itemize}
\end{proposition}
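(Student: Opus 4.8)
The plan is to treat the two update rules in \eqref{eq:br_dynamics} as a single planar map and carry out a standard linear stability analysis at its unique interior fixed point. Writing $x=x_1$, $y=x_2$ and setting $\alpha:=\alpha_1$, $\beta:=\alpha_2$, the dynamics are the map $G\(x,y\)=\(\sqrt{y/\alpha}-y,~\sqrt{x/\beta}-x\)$. First I would locate the fixed point $\(x^*,y^*\)$. Adding the fixed-point equations $x^*=\sqrt{y^*/\alpha}-y^*$ and $y^*=\sqrt{x^*/\beta}-x^*$ and introducing $S:=x^*+y^*$ gives $S=\sqrt{y^*/\alpha}=\sqrt{x^*/\beta}$, hence $y^*=\alpha S^2$ and $x^*=\beta S^2$; substituting back yields $S=\(\alpha+\beta\)S^2$, so $S=1/\(\alpha+\beta\)$ and
\[
x^*=\frac{\beta}{\(\alpha+\beta\)^2},\qquad y^*=\frac{\alpha}{\(\alpha+\beta\)^2},
\]
which is the unique positive fixed point.

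Next I would linearize. Because each coordinate of $G$ depends only on the other firm's output, the Jacobian is anti-diagonal:
\[
J=\begin{pmatrix} 0 & g_1'\(y^*\)\\ g_2'\(x^*\) & 0\end{pmatrix},\qquad g_1'\(y\)=\frac{1}{2\sqrt{\alpha y}}-1,\quad g_2'\(x\)=\frac{1}{2\sqrt{\beta x}}-1.
\]
Using $\sqrt{\alpha y^*}=\alpha/\(\alpha+\beta\)$ and $\sqrt{\beta x^*}=\beta/\(\alpha+\beta\)$, one finds $g_1'\(y^*\)=\frac{\beta-\alpha}{2\alpha}$ and $g_2'\(x^*\)=\frac{\alpha-\beta}{2\beta}$, whose product is $-\frac{\(\alpha-\beta\)^2}{4\alpha\beta}\le 0$. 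The eigenvalues of $J$ therefore satisfy $\lambda^2=-\frac{\(\alpha-\beta\)^2}{4\alpha\beta}$: they form a \emph{purely imaginary} conjugate pair (for $\alpha\neq\beta$), which is precisely what produces the rotational, spiral/focus character of the orbits — note there is no real-eigenvalue (node) regime to worry about. Their modulus is
\[
|\lambda|^2=\frac{\(\alpha-\beta\)^2}{4\alpha\beta}=\tfrac14\(r+r^{-1}-2\),\qquad r:=\alpha/\beta.
\]

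The classification then reduces to comparing $|\lambda|$ with $1$. Solving $|\lambda|^2=1$, i.e. $r+r^{-1}=6$, gives $r^2-6r+1=0$ with roots $r=3\pm2\sqrt2$; since $\(3+2\sqrt2\)\(3-2\sqrt2\)=1$, these are exactly $r_0=3+2\sqrt2$ and $r_0^{-1}$. As $\phi\(r\)=r+r^{-1}$ is convex with minimum at $r=1$, we get $|\lambda|<1$ precisely on $r_0^{-1}<r<r_0$ (stable focus, spiralling inward), $|\lambda|=1$ at the two boundary values (center), and $|\lambda|>1$ otherwise (unstable focus, spiralling outward), which is the claimed trichotomy. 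I expect the genuinely delicate point to be the boundary case $|\lambda|=1$: on the unit circle the Hartman–Grobman theorem does not apply to the nonlinear map, so the ``neutral center'' conclusion is a statement about the first-order approximation only, consistent with the first-order caveat stated above. For the strict inequalities, linearization correctly determines local (in)stability, and the complex eigenvalues guarantee the spiralling geometry visualized in Figure \ref{fig:spiral_1}.
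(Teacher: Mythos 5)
Your proposal is correct and follows essentially the same route as the paper's proof: locate the unique fixed point $\(x^*,y^*\)=\(\beta,\alpha\)/\(\alpha+\beta\)^2$, linearize to get the anti-diagonal Jacobian, observe that the eigenvalues are a complex conjugate pair with $|\lambda|^2=\(\alpha-\beta\)^2/4\alpha\beta$, and reduce the trichotomy to $r^2-6r+1=0$, i.e.\ $r\in\{r_0,r_0^{-1}\}$. Your added caveat that the ``neutral center'' conclusion is only a statement about the first-order approximation matches the paper's own disclaimer that the boundary cases are not precise due to the linearization.
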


An instantiation of the result of \Cref{prop:stability} is shown in \Cref{fig:spiral_2}.

\begin{proof}[Proof]
The unique resting point of the best response dynamics is given by the simultaneous solutions of the equations in equation \eqref{eq:br_dynamics} which yields 
\[\(x^*,y^*\)=\(v\beta,v\alpha\)/\(\alpha+\beta\)^2.\]
The stability of the dynamical system \eqref{eq:br_dynamics} around the equilibrium $\(x^*,y^*\)$ can be determined by the eigenvalues of the Jacobian of the system at the equilibrium point, 
\begin{align*}
J&=\left.\begin{pmatrix} \frac{\partial}{\partial x}x^{t+1}\(x,y\) & \frac{\partial}{\partial y}x^{t+1}\(x,y\) \\ \frac{\partial}{\partial x}y^{t+1}\(x,y\) & \frac{\partial}{\partial y}y^{t+1}\(x,y\)\end{pmatrix}\right|_{\(x,y\)=\(x^*,y^*\)}=\begin{pmatrix}0 & -\frac1{2r}\(r-1\)\\ \frac12\(r-1\) & 0\end{pmatrix},
\end{align*}
which yields the characteristic equation 
\[\det\(J-\lambda I\)=\lambda^2+\frac{1}{4r}\(r-1\)^2,\]
and hence, the complex conjugate eigenvalues $\lambda_{1,2}$ with absolute value $|\lambda_1|=|\lambda_2|=\frac{1}{2\sqrt{r}}|r-1|$. This implies that the dynamics oscillate around the equilibrium with the exact behavior depending on whether the absolute value of $\lambda_{1,2}$ is less than, equal to or larger than $1$. In particular, 
\begin{align*}
\frac{1}{2\sqrt{r}}|r-1|=1&\iff r^2-6r+1=0 \iff \(r-r_0\)\(r-r_0^{-1}\)=0,
\end{align*}
with $r_0:=3+2\sqrt{2}$, which concludes the proof. 
\end{proof}

\begin{figure}[!htb]
\centering
\includegraphics[scale=0.9]{}
\caption{The trajectories (light to dark lines) of outputs $\(x^t,y^t\)$ with respect to time $t\in[10,100]$ (vertical axis). The light blue dots denote their projections on the $x-y$ plane for a longer time span $t\in [10^3,2\cdot10^3]$. While all dynamics spiral away from the unique equilibrium (red circle on the $x-y$ plane) in agreement with Proposition \ref{prop:stability}, their attractors change significantly for only small perturbations in the parameters of the game as expressed by the asymmetry-ratio $r$.}
\label{fig:spiral_2}
\end{figure}

\end{document}